\definecolor{graybright}{cmyk}{0, 0, 0, 0.15}
\definecolor{mydarkgreen}{cmyk}{0.85, 0.31, 0.96, 0.2}
\definecolor{myblue}{cmyk}{0.91, 0.67, 0.53, 0.51}
\definecolor{myred}{cmyk}{0.08, 0.86, 0.75, 0.01}
\definecolor{myorange}{cmyk}{0.08, 0.49, 1, 0}
\definecolor{stringgreen}{rgb}{0.25,0.5,0.35} % comments
\definecolor{mygray}{rgb}{0.5, 0.5, 0.5} % comments
\newcommand{\cmark}{\color{mydarkgreen} \ding{51}}%
\newcommand{\xmark}{\color{myred} \ding{55}}%
\tikzset{->,  % makes the edges directed
>=stealth, % makes the arrow heads bold
node distance=0.1cm, % specifies the minimum distance between two nodes
}
\tikzset{every state/.append style={thick, font=\scriptsize, fill=gray!10, rectangle, rounded corners, inner sep=3pt, minimum size=0.4cm}}
\tikzset{label_node/.style={font=\scriptsize, align=center, auto}}
\newcommand*{\Scale}[2][4]{\scalebox{#1}{$#2$}}%
\providecommand{\incomp}{
\Scale[0.7]{
  \not\mathrel{% it's a relation
  \smash{% we don't want that it influences the interline spacing
  \vcenter{% the symbol will be vertically centered
    \offinterlineskip % no interline skip here
    \ialign{% build a table
       \hfil##\hfil\cr % just one centered column
       $\sqsubseteq$\cr % first row
       \noalign{\kern.3ex}% shorten the vertical distance
       $\sqsupset$\cr % second row
    }% end of the \ialign
  }% end of \vcenter
  }% end of \smash
  \vphantom{>}% pretend it's as high as a >
  }% end of \mathrel
 }
}
\newcommand{\incomparable}{\raisebox{0.05cm}{\incomp}}
\newcommand{\CodeCurly}[1]{\textcolor{mydarkgreen}{#1}}
\newcommand{\CodeRound}[1]{\textcolor{myred}{#1}}
\newcommand*{\SavedLstInline}{}
\LetLtxMacro\SavedLstInline\lstinline
\DeclareRobustCommand*{\lstinline}{%
  \ifmmode
    \let\SavedBGroup\bgroup
    \def\bgroup{%
      \let\bgroup\SavedBGroup
      \hbox\bgroup
    }%
  \fi
  \SavedLstInline
}
\lstdefinestyle{BCSL}{
  literate={(}{{\CodeRound{(}}}1
           {=>}{{$\Rightarrow$}}1
           {+}{{\hspace{0.05cm}$+$\hspace{0.05cm}}}1
           {::}{{\hspace{0.01cm}$::$\hspace{0.01cm}}}1
           {*}{{$\times$}}1
         {)}{{\CodeRound{)}}}1
         {\{}{{\CodeCurly{\{}}}1
           {\}}{{\CodeCurly{\}}}}1, 
  commentstyle=\color{mygray}\ttfamily
}
\lstdefinestyle{EBNF}{
  stringstyle=\color{stringgreen},
  showstringspaces=false
}
\begin{document}

\title{Regulated Multiset Rewriting Systems} %TODO Please add

\titlerunning{Regulated Multiset Rewriting Systems} %TODO optional, please use if title is longer than one line

\author{Matej Troj\'{a}k, Samuel Pastva, David \v{S}afr\'{a}nek, and Lubo\v{s} Brim}

\institute{Faculty of Informatics, Masaryk University, Brno, Czech Republic}

\authorrunning{Troj\'ak et al.} %TODO mandatory. First: Use abbreviated first/middle names. Second (only in severe cases): Use first author plus 'et al.'

\maketitle

\begin{abstract}
Multiset rewriting systems provide a formalism particularly suitable for the description of biological systems. We present an extension of this formalism with additional controls on the derivations as a tool for reducing possible non-deterministic behaviour by providing additional knowledge about the system. We introduce several regulation mechanisms and compare their generative power.
\end{abstract}

\section{Introduction}

Multiset rewriting systems (MRSs)~\cite{meseguer1992conditional} provide a modelling formalism suitable in particular for describing biological systems~\cite{barbuti2008intermediate,bistarelli2003representing}. MRS can be considered as a low-level formalism, describing how multisets of elements can be modified by applying rewriting \emph{rules} representing biological processes. Such a basic mechanism can serve as a base to build higher-level languages (e.g. rule-based modelling~\cite{boutillier2018kappa,danos2013constraining,harris2016bionetgen}, membrane computing~\cite{puaun2007membrane}, Petri nets~\cite{cervesato1994petri}), providing more convenient approaches to describe interesting biological phenomena. The mechanistic nature of MRS supports extendible and decomposable system construction by separate rule description, which becomes extremely useful when newly discovered biological knowledge needs to be introduced into the system.

Due to the non-deterministic behaviour of MRS, there are types of knowledge that cannot be simply covered by adding new rules or modifying existing ones. For example, consider a situation where a rule is always followed by another one (e.g. complex formation and its phosphorylation in circadian clock mechanism~\cite{dved2016formal}), two rules cannot take place after each other (e.g. cell division after entering apoptosis phase~\cite{swat2004bifurcation}), a rule always has a priority over another one (e.g. cell differentiation~\cite{harel2008concurrency}), or a rule cannot be used in a particular context (e.g. explicit modelling of inhibition in a signalling pathway).

A possible approach to capture the above-described types of behaviour is to employ some rewriting control in the form of a \emph{regulated} application of rewriting rules. Regulation of rewriting thus could provide an additional mechanism for controlling the rule application by restricting the conditions when a rule can be enabled. Regulated rewriting is well-established in the area of string grammars~\cite{dassow2004grammars} and has been extended to several other modelling formalisms such as Petri nets~\cite{ichikawa1988analysis} or membrane computing~\cite{freund2004regulated}. In~\cite{delzanno2002overview} authors introduced an approach to regulate MRS using first-order logical formulae over variables, allowing to express constraints to rules application. While this approach significantly increases the expressive power of MRS, it is not particularly suitable for reasoning about the type of behaviour based on relationships among rewriting rules as described in the examples above.

In this paper, we introduce separate classes of regulation mechanisms of \emph{multiset rewriting systems}. They are based on our experience with modelling in systems biology~\cite{vsafranek2011photosynthesis,trojak2016cyanobacterium,trojak2020plosone} and correspond to the form of specification experts outside of computer science expect. We define several regulation mechanisms, namely: \emph{regular} rewriting restricted by regular language over rules, \emph{ordered} rewriting restricted by strict partial order on rules, \emph{programmed} rewriting restricted by successors for each rule, \emph{conditional} rewriting restricted by prohibited context for each rule, and \emph{concurrent-free} rewriting restricted by resolving concurrent behaviour of multiple rules. We investigate and compare the generative power of the defined types of regulations and show the effect of regulations on the expressive power (both summarised in Figure~\ref{summary}).

\section{Multiset rewriting systems}

In this section, we recall some definitions and known results about multisets and rewriting systems over them. Intuitively, a multiset is a set of elements with allowed repetitions. A~multiset rewriting \emph{rule} describes how a particular multiset is transformed into another one. A multiset rewriting system consists of a set of multiset rewriting rules, defining how the system can evolve, and an initial multiset, representing the starting point for the rewriting.

Let $\mathcal{S}$ be a finite set of \emph{elements}. A \emph{multiset} over $\mathcal{S}$ is a total function $ \texttt{M} : \mathcal{S} \rightarrow \mathbb{N}$ (where $\mathbb{N}$ is the set of natural numbers including 0). For each  $\mathit{a} \in \mathcal{S}$ the \emph{multiplicity} (the number of occurrences) of $\mathit{a}$ is the number $\mathtt{M}(\mathit{a})$. Operations and relations over multisets are defined in a standard way, taking into account the repetition of elements (for details we refer to~\cite{cervesato1994petri}).

A \emph{multiset rewriting rule} over $\mathcal{S}$ is a pair $\mu = (^{\bullet}\mu, \mu^{\bullet})$ of multisets over $\mathcal{S}$, written as $\mu: \hspace{0.01cm} ^{\bullet}\mu \to \mu^{\bullet}$ for convenience. The rule rewrites elements specified in the left-hand side $^{\bullet}\mu$ to elements specified in the right-hand side $\mu^{\bullet}$.

A \emph{multiset rewriting system} (MRS) over $\mathcal{S}$ is a pair $\mathcal{M} = (\mathcal{X}, \mathtt{M_0})$, where $\mathcal{X}$ is a finite set of multiset rewrite rules and $\mathtt{M_0}$ is the initial multiset (\emph{state}), both over $\mathcal{S}$.

We denote by $\mathbb{MRS}$ the class of (non-regulated) multiset rewriting systems. An example of an MRS is given in Figure~\ref{mrs_example}.

\begin{figure}[!h]
\begin{center}
 $\mathcal{M} = 
		\left\{ 
			\begin{array}{l}
	  		\mathtt{M_0 = \{A, A\}} \\
	  		\mathcal{X} = \left\{ 
	  		 	\begin{array}{l}
	  		 	\mu_\mathtt{1}: \mathtt{ \{A,A\} } \to \mathtt{ \{B,A\} }, \mu_\mathtt{2}: \mathtt{ \{A\} } \to \mathtt{ \{X\} } \\
	  		 	\mu_\mathtt{3}: \mathtt{ \{B\} } \to \mathtt{ \{C\} }, \mu_\mathtt{4}: \mathtt{ \{C\} } \to \mathtt{ \{B\} }
	  		 	\end{array} 
	  		  \right\}
	  		\end{array}
  		\right\}$
\end{center}
\caption{Example of an MRS over elements $\mathcal{S} = \mathtt{\{ A,B,C,X \}}$.}\label{mrs_example}
\end{figure}

Let $\mathtt{M}$ be a multiset and $\mathcal{M} = (\mathcal{X}, \mathtt{M_0})$ an MRS, both over $\mathcal{S}$.
A rule $\mu \in \mathcal{X}$ is \emph{enabled} at $\mathtt{M}$ if $^{\bullet}\mu \subseteq \mathtt{M}$. 
The \emph{application} of an enabled rule $\mu \in \mathcal{X}$ to $\mathtt{M}$, written $\mathtt{M} \rightarrow_{\mu} \mathtt{M}'$, creates a multiset $\mathtt{M}' = (\mathtt{M} \setminus \hspace{0.01cm} ^{\bullet}\mu) \cup \mu^{\bullet}$.
A \emph{run} $\pi$ of $\mathcal{M}$ is an infinite sequence of multisets $\pi = \mathtt{M_0} \mathtt{M_1} \mathtt{M_2} \ldots$ such that for any \emph{step} $\mathtt{i > 0}$ holds that $\mathtt{M_{i-1}} \rightarrow_{\mu} \mathtt{M_i}$ for some $\mu \in \mathcal{X}$. We denote by $\pi[\mathtt{i}]$ the multiset created in step $\mathtt{i}$. A \emph{run label} $\overrightarrow{\pi}$ of a run $\pi$ is an infinite sequence of rules $\overrightarrow{\pi} = \mu_\mathtt{1} \mu_\mathtt{2} \mu_\mathtt{3} \ldots$ such that for any \emph{step} $\mathtt{i > 0}$ holds that $\pi[\mathtt{i-1}] \rightarrow_{\mu_\mathtt{i}} \pi[\mathtt{i}]$. We denote by $\overrightarrow{\pi}[\mathtt{i}]$ the rule applied in step \texttt{i}.

The \emph{semantics} of the system $\mathcal{M}$ is a (potentially infinite) set $\mathfrak{L}(\mathcal{M})$ of all possible runs such that $\forall \pi \in \mathfrak{L}(\mathcal{M}).~ \pi[\mathtt{0}] = \mathtt{M_0}$ (runs start in the initial multiset). In order to ensure the infiniteness of runs, we implicitly assume the presence of a special \emph{empty} rule $\varepsilon = (\emptyset, \emptyset)$. We require that this rule can be applied \emph{only} when no other rule of the system is enabled. It also ensures that the set of rules $\mathcal{X}$ is always non-empty. In Figure~\ref{runs_example} we give an example of runs.

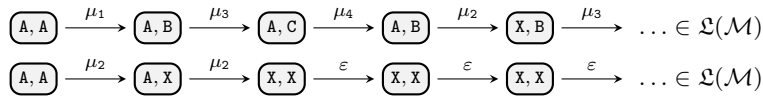
\begin{figure}[!h]
\begin{center}
\begin{tikzpicture}

\node[state] (s0) {$\mathtt{A,A}$};
\node[state, right=1cm of s0] (s1) {$\mathtt{A,B}$};
\node[state, right=1cm of s1] (s2) {$\mathtt{A,C}$};
\node[state, right=1cm of s2] (s3) {$\mathtt{A,B}$};
\node[state, right=1cm of s3] (s4) {$\mathtt{X,B}$};
\node[right=1cm of s4] (D) {$\ldots \in \mathfrak{L}(\mathcal{M})$};

\draw[shorten >=1mm,shorten <=1mm]
	    (s0) edge[above] node[label_node]{$\mu_\mathtt{1}$} (s1)
		(s1) edge[above] node[label_node]{$\mu_\mathtt{3}$} (s2)
		(s2) edge[above] node[label_node]{$\mu_\mathtt{4}$} (s3)
		(s3) edge[above] node[label_node]{$\mu_\mathtt{2}$} (s4)
		(s4) edge[above] node[label_node]{$\mu_\mathtt{3}$} (D);
\end{tikzpicture}

\begin{tikzpicture}

\node[state] (s0) {$\mathtt{A,A}$};
\node[state, right=1cm of s0] (s1) {$\mathtt{A,X}$};
\node[state, right=1cm of s1] (s2) {$\mathtt{X,X}$};
\node[state, right=1cm of s2] (s3) {$\mathtt{X,X}$};
\node[state, right=1cm of s3] (s4) {$\mathtt{X,X}$};
\node[right=1cm of s4] (D) {$\ldots \in \mathfrak{L}(\mathcal{M})$};

\draw[shorten >=1mm,shorten <=1mm]
	    (s0) edge[above] node[label_node]{$\mu_\mathtt{2}$} (s1)
		(s1) edge[above] node[label_node]{$\mu_\mathtt{2}$} (s2)
		(s2) edge[above] node[label_node]{$\varepsilon$} (s3)
		(s3) edge[above] node[label_node]{$\varepsilon$} (s4)
		(s4) edge[above] node[label_node]{$\varepsilon$} (D);
\end{tikzpicture}
\end{center}
\caption{Example of runs of the MRS from Figure~\ref{mrs_example}, including their run labels.}\label{runs_example}
\end{figure}

\section{Regulated rewriting}

The semantics of MRS is a set of all possible runs induced by its non-deterministic nature. In this section, we introduce \emph{regulated} MRS, which restricts the applicability of rules, leading to a reduced set of possible runs (reduced semantics).

A \emph{regulated} multiset rewriting system (rMRS) is a triple $\overline{\mathcal{M}} =(\mathcal{X}, \mathtt{M_0}, \zeta)$ such that $\mathcal{M} = (\mathcal{X}, \mathtt{M_0})$ is an MRS and $\zeta$ is a \emph{regulation}, which is used to reduce the set of possible runs $\mathfrak{L}(\mathcal{M})$ to $\mathfrak{L}(\overline{\mathcal{M}})$. The exact definition of $\zeta$ depends on the particular reduction mechanisms and is specified for individual types of suggested regulation below.

\subsection{Regular rewriting}

Informally, the idea is to define an $\omega$-regular language $\zeta$ over rules, that is, a regular language over infinite words. Then, only runs with the run label from this language are allowed. The set of words $\zeta$, due to its infiniteness, is almost always (though not necessarily) described via some more convenient mechanism (such as $\omega$-regular expression).

In a \emph{regular} multiset rewriting system, the regulation $\zeta$ is defined as an $\omega$-regular language over set of rules $\mathcal{X}$. The set of possible runs $\mathfrak{L}(\mathcal{M})$ is reduced to a set of \emph{regular} runs $\mathfrak{L}(\overline{\mathcal{M}}) = \{ ~\pi \in \mathfrak{L}(\mathcal{M}) ~|~ \overrightarrow{\pi} \in \zeta~ \}$. Moreover, we require that $\forall \overrightarrow{\pi} \in \zeta ~ \exists \pi \in \mathfrak{L}(\overline{\mathcal{M}})$ (there are no extra words in $\zeta$).

We denote by $\mathbb{RR}$ the class of regular multiset rewriting systems. An example of an $\mathbb{RR}$ system is available in Figure~\ref{rr_example} with an example of valid and invalid runs in Figure~\ref{rr_runs_example}.

\begin{figure}[!h]
\begin{center}
$\overline{\mathcal{M}} = 
		\left\{ 
			\begin{array}{l}
	  		\mathtt{M_0 = \emptyset}, \hspace{0.2cm} \zeta = (\mu_\mathtt{1}.\mu_\mathtt{2})^*.{\mu_\mathtt{3}}^*.\varepsilon^\omega,\\
	  		\mathcal{X} = \left\{ 
	  		 	\begin{array}{l}
	  		 	\mu_\mathtt{1}: \emptyset \to \mathtt{ \{A\} }, \mu_\mathtt{2}: \mathtt{ \{A\} } \to \mathtt{ \{B\} }, \mu_\mathtt{3}: \mathtt{ \{B\} } \to \emptyset
	  		 	\end{array} 
	  		  \right\} \\
	  		\end{array}
  		\right\}$
\end{center}
\caption{Example of a regular multiset rewriting system over a set of elements $\mathcal{S} = \mathtt{\{ A,B \}}$. The regulation $\zeta$ allows a particular finite sequence of rules followed by infinite application of rule $\varepsilon$.}\label{rr_example}
\end{figure}

\begin{figure}[!h]
\begin{center}
\begin{tikzpicture}
\node[state] (s0) {};
\node[state, right=1cm of s0] (s1) {$\mathtt{A}$};
\node[state, right=1cm of s1] (s2) {$\mathtt{B}$};
\node[state, right=1cm of s2] (s3) {};
\node[state, right=1cm of s3] (s4) {};
\node[right=1cm of s4] (D) {$\ldots \pi_\mathtt{1}$ \cmark};

\draw[shorten >=1mm,shorten <=1mm]
	    (s0) edge[above] node[label_node]{$\mu_\mathtt{1}$} (s1)
		(s1) edge[above] node[label_node]{$\mu_\mathtt{2}$} (s2)
		(s2) edge[above] node[label_node]{$\mu_\mathtt{3}$} (s3)
		(s3) edge[above] node[label_node]{$\varepsilon$} (s4)
		(s4) edge[above] node[label_node]{$\varepsilon$} (D);
\end{tikzpicture}

\begin{tikzpicture}

\node[state] (s0) {};
\node[state, right=1cm of s0] (s1) {$\mathtt{A}$};
\node[state, right=1cm of s1] (s2) {$\mathtt{B}$};
\node[state, right=1cm of s2] (s3) {};
\node[state, right=1cm of s3] (s4) {$\mathtt{A}$};
\node[right=1cm of s4] (D) {$\ldots \pi_\mathtt{2}$ \xmark};

\draw[shorten >=1mm,shorten <=1mm]
	    (s0) edge[above] node[label_node]{$\mu_\mathtt{1}$} (s1)
		(s1) edge[above] node[label_node]{$\mu_\mathtt{2}$} (s2)
		(s2) edge[above] node[label_node]{$\mu_\mathtt{3}$} (s3)
		(s3) edge[above] node[label_node]{$\color{myred} \mu_\mathtt{1}$} (s4)
		(s4) edge[above] node[label_node]{$\mu_\mathtt{2}$} (D);
\end{tikzpicture}
\end{center}
\caption{Example of valid and invalid runs of the $\mathbb{RR}$ system from Figure~\ref{rr_example}. Run $\pi_\mathtt{1} \in \mathfrak{L}(\overline{\mathcal{M}})$ because $\protect\overrightarrow{\pi_\mathtt{1}}~=~\mu_\mathtt{1}.\mu_\mathtt{2}.\mu_\mathtt{3}.\varepsilon^\omega \in \zeta$ and run $\pi_\mathtt{2} \not\in \mathfrak{L}(\overline{\mathcal{M}})$ because $\protect\overrightarrow{\pi_\mathtt{2}} = \mu_\mathtt{1}.\mu_\mathtt{2}.\mu_\mathtt{3}.\mu_\mathtt{1}\ldots \not\in \zeta$.}\label{rr_runs_example}
\end{figure}

\subsection{Ordered rewriting}

The ordered rewriting regulation is based on a strict partial order on rules. Thus, only runs with the run label, which does not break the ordering in the immediate successors, are allowed.

In an \emph{ordered} multiset rewriting system, the regulation $\zeta$ is a strict partial order over a set of rules. For any two rules $\mu, \mu' \in \mathcal{X}$ we write $\mu < \mu'$ iff $(\mu, \mu') \in \zeta$. The set of possible runs $\mathfrak{L}(\mathcal{M})$ is reduced to a set of \emph{ordered} runs $\mathfrak{L}(\overline{\mathcal{M}}) = \{ ~\pi \in \mathfrak{L}(\mathcal{M}) ~|~ \forall \mathtt{i > 0} .~ \overrightarrow{\pi}[\mathtt{i+1}] \not< \overrightarrow{\pi}[\mathtt{i}] \}$.

We denote by $\mathbb{OR}$ the class of ordered multiset rewriting systems. An example of an $\mathbb{OR}$ system is available in Figure~\ref{or_example} with an example of valid and invalid runs in Figure~\ref{or_runs_example}.

\begin{figure}[!h]
\begin{center}
$\overline{\mathcal{M}} = 
		\left\{ 
			\begin{array}{l}
	  		\mathtt{M_0 = \emptyset}, \hspace{0.2cm} \zeta = \left\{ 
          \begin{array}{l}
          (\mu_\mathtt{1}, \mu_\mathtt{2})
          \end{array} 
          \right\}, \\
	  		\mathcal{X} = \left\{ 
	  		 	\begin{array}{l}
	  		 	\mu_\mathtt{1}: \emptyset \to \mathtt{ \{A\} }, \mu_\mathtt{2}: \mathtt{ \{A\} } \to \emptyset
	  		 	\end{array} 
	  		  \right\} \\
	  		\end{array}
  		\right\}$
\end{center}
\caption{Example of an ordered multiset rewriting system over a set of elements $\mathcal{S} = \mathtt{\{ A \}}$. The regulation $\zeta$ defines order $\mu_\mathtt{1} < \mu_\mathtt{2}$ on rules $\mu_\mathtt{1}, \mu_\mathtt{2}$, which does not allow $\mu_\mathtt{1}$ to be immediately used after $\mu_\mathtt{2}$.}\label{or_example}
\end{figure}

\begin{figure}[!h]
\begin{center}
\begin{tikzpicture}

\node[state] (s0) {};
\node[state, right=1cm of s0] (s1) {$\mathtt{A}$};
\node[state, right=1cm of s1] (s2) {$\mathtt{A,A}$};
\node[state, right=1cm of s2] (s3) {$\mathtt{A}$};
\node[state, right=1cm of s3] (s4) {};
\node[right=1cm of s4] (D) {$\ldots \pi_\mathtt{1}$ \cmark};

\draw[shorten >=1mm,shorten <=1mm]
	    (s0) edge[above] node[label_node]{$\mu_\mathtt{1}$} (s1)
		(s1) edge[above] node[label_node]{$\mu_\mathtt{1}$} (s2)
		(s2) edge[above] node[label_node]{$\mu_\mathtt{2}$} (s3)
		(s3) edge[above] node[label_node]{$\mu_\mathtt{2}$} (s4)
		(s4) edge[above] node[label_node]{$\varepsilon$} (D);
\end{tikzpicture}

\begin{tikzpicture}

\node[state] (s0) {};
\node[state, right=1cm of s0] (s1) {$\mathtt{A}$};
\node[state, right=1cm of s1] (s2) {};
\node[state, right=1cm of s2] (s3) {$\mathtt{A}$};
\node[state, right=1cm of s3] (s4) {};
\node[right=1cm of s4] (D) {$\ldots \pi_\mathtt{2}$ \xmark};

\draw[shorten >=1mm,shorten <=1mm]
	    (s0) edge[above] node[label_node]{$\mu_\mathtt{1}$} (s1)
		(s1) edge[above] node[label_node]{$\color{myred} \mu_\mathtt{2}$} (s2)
		(s2) edge[above] node[label_node]{$\color{myred} \mu_\mathtt{1}$} (s3)
		(s3) edge[above] node[label_node]{$\mu_\mathtt{2}$} (s4)
		(s4) edge[above] node[label_node]{$\mu_\mathtt{1}$} (D);
\end{tikzpicture}
\end{center}
\caption{Example of valid and invalid runs of the $\mathbb{OR}$ system from Figure~\ref{or_example}. Run $\pi_\mathtt{1} \in \mathfrak{L}(\overline{\mathcal{M}})$ because the rule $\mu_\mathtt{1}$ is never used immediately after rule $\mu_\mathtt{2}$. Run $\pi_\mathtt{2} \not\in \mathfrak{L}(\overline{\mathcal{M}})$ because the rule $\mu_\mathtt{1}$ was used immediately after rule $\mu_\mathtt{2}:~\protect\overrightarrow{\pi_\mathtt{2}}[\mathtt{3}] < \protect\overrightarrow{\pi_\mathtt{2}}[\mathtt{2}]$.}\label{or_runs_example}
\end{figure}
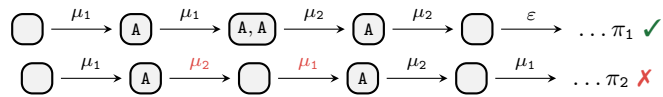

\subsection{Programmed rewriting}

Informally, for every rule, there exists a set of successor rules. Then, only runs with the run label where each rule is followed by its successor are allowed.

In a \emph{programmed} multiset rewriting system, the regulation $\zeta: \mathcal{X} \to \mathtt{2}^{\mathcal{X}}$ is a \emph{successor} function, assigning to each rule a set of successor rules. The set of possible runs $\mathfrak{L}(\mathcal{M})$ is reduced to a set of \emph{programmed} runs $\mathfrak{L}(\overline{\mathcal{M}}) = \{ ~\pi \in \mathfrak{L}(\mathcal{M}) ~|~ \forall \mathtt{i > 0} .~ \overrightarrow{\pi}[\mathtt{i+1}] \in \zeta(\overrightarrow{\pi}[\mathtt{i}]) ~\}$.

We denote by $\mathbb{PR}$ the class of programmed multiset rewriting systems. An example of a $\mathbb{PR}$ system is available in Figure~\ref{pr_example} with an example of valid and invalid runs in Figure~\ref{pr_runs_example}.

\begin{figure}[!h]
\begin{center}
$\overline{\mathcal{M}} = 
		\left\{ 
			\begin{array}{l}
	  		\mathtt{M_0 = \emptyset}, \hspace{0.2cm} \zeta = \left\{ 
          \begin{array}{l}
          \mu_\mathtt{1} \to \{\mu_\mathtt{2}\}, \mu_\mathtt{2} \to \{\mu_\mathtt{1}\}
          \end{array} 
          \right\}, \\
	  		\mathcal{X} = \left\{ 
	  		 	\begin{array}{l}
	  		 	\mu_\mathtt{1}: \emptyset \to \mathtt{ \{A\} }, \mu_\mathtt{2}: \emptyset \to \mathtt{ \{B\} }
	  		 	\end{array} 
	  		  \right\} \\
	  		\end{array}
  		\right\}$
\end{center}
\caption{Example of a programmed multiset rewriting system over a set of elements $\mathcal{S} = \mathtt{\{ A,B \}}$. The regulation $\zeta$ allows only alternate application of rules $\mu_\mathtt{1}$ and $\mu_\mathtt{2}$.}\label{pr_example}
\end{figure}

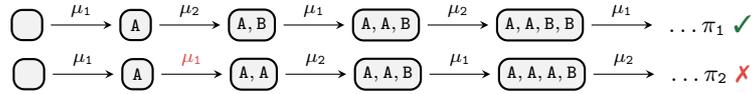
\begin{figure}[!h]
\begin{center}
\begin{tikzpicture}

\node[state] (s0) {};
\node[state, right=1cm of s0] (s1) {$\mathtt{A}$};
\node[state, right=1cm of s1] (s2) {$\mathtt{A,B}$};
\node[state, right=1cm of s2] (s3) {$\mathtt{A,A,B}$};
\node[state, right=1cm of s3] (s4) {$\mathtt{A,A,B,B}$};
\node[right=1cm of s4] (D) {$\ldots \pi_\mathtt{1}$ \cmark};

\draw[shorten >=1mm,shorten <=1mm]
	    (s0) edge[above] node[label_node]{$\mu_\mathtt{1}$} (s1)
		(s1) edge[above] node[label_node]{$\mu_\mathtt{2}$} (s2)
		(s2) edge[above] node[label_node]{$\mu_\mathtt{1}$} (s3)
		(s3) edge[above] node[label_node]{$\mu_\mathtt{2}$} (s4)
		(s4) edge[above] node[label_node]{$\mu_\mathtt{1}$} (D);
\end{tikzpicture}

\begin{tikzpicture}

\node[state] (s0) {};
\node[state, right=1cm of s0] (s1) {$\mathtt{A}$};
\node[state, right=1cm of s1] (s2) {$\mathtt{A,A}$};
\node[state, right=1cm of s2] (s3) {$\mathtt{A,A,B}$};
\node[state, right=1cm of s3] (s4) {$\mathtt{A,A,A,B}$};
\node[right=1cm of s4] (D) {$\ldots \pi_\mathtt{2}$ \xmark};

\draw[shorten >=1mm,shorten <=1mm]
	    (s0) edge[above] node[label_node]{$\mu_\mathtt{1}$} (s1)
		(s1) edge[above] node[label_node]{$\color{myred} \mu_\mathtt{1}$} (s2)
		(s2) edge[above] node[label_node]{$\mu_\mathtt{2}$} (s3)
		(s3) edge[above] node[label_node]{$\mu_\mathtt{1}$} (s4)
		(s4) edge[above] node[label_node]{$\mu_\mathtt{2}$} (D);
\end{tikzpicture}
\end{center}
\caption{Example of valid and invalid runs of the $\mathbb{PR}$ system from Figure~\ref{pr_example}. Run $\pi_\mathtt{1} \in \mathfrak{L}(\overline{\mathcal{M}})$ because rules $\mu_\mathtt{1}$ and $\mu_\mathtt{2}$ are alternating and run $\pi_\mathtt{2} \not\in \mathfrak{L}(\overline{\mathcal{M}})$ because the run label $\protect\overrightarrow{\pi_\mathtt{2}}$ violates the successor function, e.g. $\protect\overrightarrow{\pi_\mathtt{2}}[\mathtt{2}] \not\in \zeta(\protect\overrightarrow{\pi_\mathtt{2}}[\mathtt{1}])$ (i.e. $\mu_\mathtt{1} \not\in \zeta(\mu_\mathtt{1})$).}\label{pr_runs_example}
\end{figure}

\subsection{Conditional rewriting}

The idea is to define multiset of elements (or multiple of them) for each rule, which represents a \emph{prohibited} context where the rule cannot be used. Then, a run is valid, if in every step, there prohibited context of the used rule is not subset of the current multiset.

In a \emph{conditional} multiset rewriting system, the regulation $\zeta$ is defined as a \emph{conditional} function, assigning a set of multisets over elements $\mathcal{S}$ to each rule. Each of these multisets represent the prohibited context in which the rule cannot be applied, i.e. it is not a subset of the current state. The set of possible runs $\mathfrak{L}(\mathcal{M})$ is reduced to a set of \emph{conditional} runs $\mathfrak{L}(\overline{\mathcal{M}}) = \{ \pi \in \mathfrak{L}(\mathcal{M}) ~|~ \forall \mathtt{i > 0}.~\forall \mathcal{A} \in \zeta(\overrightarrow{\pi}[\mathtt{i}]).~ \mathcal{A} \not\subseteq \pi[\mathtt{i-1}] \}$.

We denote by $\mathbb{CR}$ the class of conditional multiset rewriting systems. An example of a $\mathbb{CR}$ system is available in Figure~\ref{cr_example} with an example of valid and invalid runs in Figure~\ref{cr_runs_example}. Note that in the case when there is only one prohibited context defined for a rule, we simplify the notation by omitting the parent set.

\begin{figure}[!h]
\begin{center}
$\overline{\mathcal{M}} = 
		\left\{ 
			\begin{array}{l}
	  		\mathtt{M_0 = \emptyset}, \hspace{0.2cm} \zeta = \left\{ 
          \begin{array}{l}
          \mu_\mathtt{1} \to \mathtt{ \{B\} }, \mu_\mathtt{2} \to \emptyset
          \end{array} 
          \right\}, \\
	  		\mathcal{X} = \left\{ 
	  		 	\begin{array}{l}
	  		 	\mu_\mathtt{1}: \emptyset \to \mathtt{ \{A\} }, \mu_\mathtt{2}: \mathtt{ \{A\} } \to \mathtt{ \{B\} }
	  		 	\end{array} 
	  		  \right\} \\
	  		\end{array}
  		\right\}$
\end{center}
\caption{Example of a conditional multiset rewriting system over a set of elements $\mathcal{S} = \mathtt{\{ A,B \}}$. The regulation $\zeta$ limits the application of rule $\mu_\mathtt{1}$ to states where an element \texttt{B} is not present.}\label{cr_example}
\end{figure}

\begin{figure}[!h]
\begin{center}
\begin{tikzpicture}

\node[state] (s0) {};
\node[state, right=1cm of s0] (s1) {$\mathtt{A}$};
\node[state, right=1cm of s1] (s2) {$\mathtt{A,A}$};
\node[state, right=1cm of s2] (s3) {$\mathtt{A,B}$};
\node[state, right=1cm of s3] (s4) {$\mathtt{B,B}$};
\node[right=1cm of s4] (D) {$\ldots \pi_\mathtt{1}$ \cmark};

\draw[shorten >=1mm,shorten <=1mm]
	    (s0) edge[above] node[label_node]{$\mu_\mathtt{1}$} (s1)
		(s1) edge[above] node[label_node]{$\mu_\mathtt{1}$} (s2)
		(s2) edge[above] node[label_node]{$\mu_\mathtt{2}$} (s3)
		(s3) edge[above] node[label_node]{$\mu_\mathtt{2}$} (s4)
		(s4) edge[above] node[label_node]{$\varepsilon$} (D);
\end{tikzpicture}

\begin{tikzpicture}

\node[state] (s0) {};
\node[state, right=1cm of s0] (s1) {$\mathtt{A}$};
\node[state, right=1cm of s1] (s2) {$\color{myred} \mathtt{B}$};
\node[state, right=1cm of s2] (s3) {$\mathtt{A,B}$};
\node[state, right=1cm of s3] (s4) {$\mathtt{B,B}$};
\node[right=1cm of s4] (D) {$\ldots \pi_\mathtt{2}$ \xmark};

\draw[shorten >=1mm,shorten <=1mm]
	    (s0) edge[above] node[label_node]{$\mu_\mathtt{1}$} (s1)
		(s1) edge[above] node[label_node]{$\mu_\mathtt{2}$} (s2)
		(s2) edge[above] node[label_node]{$\color{myred} \mu_\mathtt{1}$} (s3)
		(s3) edge[above] node[label_node]{$\mu_\mathtt{2}$} (s4)
		(s4) edge[above] node[label_node]{$\mu_\mathtt{1}$} (D);
\end{tikzpicture}
\end{center}
\caption{Example of valid and invalid runs of the $\mathbb{CR}$ system from Figure~\ref{cr_example}. Run $\pi_\mathtt{1} \in \mathfrak{L}(\overline{\mathcal{M}})$ because rule $\mu_\mathtt{1}$ is never used in the context of element \texttt{B}. Run $\pi_\mathtt{2} \not\in \mathfrak{L}(\overline{\mathcal{M}})$ because rule $\mu_\mathtt{1}$ was applied in prohibited context: $\mathtt{ \{B\} } \in \zeta(\protect\overrightarrow{\pi}[\mathtt{3}])$ with $\mathtt{ \{B\} } \subseteq \pi[\mathtt{2}]$.}\label{cr_runs_example}
\end{figure}
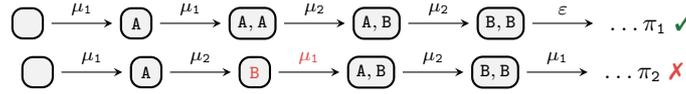

\subsection{Concurrent-free rewriting}

The idea is to detect concurrent rules and assign a priority to one of them. Then, only runs where a prioritised rule was used in place of non-prioritised ones are valid. We say two rules $\mu, \mu'$ are \emph{concurrent} iff $^\bullet\mu \hspace{0.01cm} \cap \hspace{0.01cm} ^\bullet\mu' \neq \emptyset$ (i.e., rules rewrite common elements).

In a \emph{concurrent-free} multiset rewriting system, the regulation $\zeta \subseteq \mathcal{X} \times \mathcal{X}$ is a binary relation over rules, for which holds that (1) the relation is irreflexive, (2) any two rules $\mu, \mu'$ with $(\mu, \mu') \in \zeta$ are concurrent, and (3) if $(\mu, \mu') \in \zeta$, then $(\mu', \mu) \not\in \zeta^+$, where $\zeta^+$ is the transitive closure of relation $\zeta$. The third condition ensures acyclicity, which disables ambiguous priority resolving. The set of possible runs $\mathfrak{L}(\mathcal{M})$ is reduced to a set of \emph{concurrent-free} runs 
$\mathfrak{L}(\overline{\mathcal{M}}) = \{ \pi \in \mathfrak{L}(\mathcal{M}) \mid \forall \mathtt{i} > 0~ \forall \mu \in \mathcal{X} ~\mathtt{enabled~at}~ \pi[\mathtt{i}] .~ (\overrightarrow{\pi}[\mathtt{i+1}], \mu) \not\in \zeta \}$.

We denote by $\mathbb{CFR}$ the class of concurrent-free multiset rewriting systems. An example of a $\mathbb{CFR}$ system is available in Figure~\ref{cfr_example} with an example of valid and invalid runs in Figure~\ref{cfr_runs_example}.

\begin{figure}[!h]
\begin{center}
$\overline{\mathcal{M}} = 
		\left\{ 
			\begin{array}{l}
	  		\mathtt{M_0 = \mathtt{ \{A\} } }, \hspace{0.2cm} \zeta = \left\{ 
          \begin{array}{l}
          (\mu_\mathtt{3}, \mu_\mathtt{2})
          \end{array} 
          \right\}, \\
	  		\mathcal{X} = \left\{ 
	  		 	\begin{array}{l}
          \mu_\texttt{1}: \mathtt{ \{A\} } \to \mathtt{\{ A,B \}}, \mu_\texttt{2}: \mathtt{\{A, B\}} \to \mathtt{\{ A \}}, \mu_\texttt{3}: \mathtt{\{ A \}} \to \emptyset
	  		 	\end{array} 
	  		  \right\} \\
	  		\end{array}
  		\right\}$
\end{center}
\caption{Example of a concurrent-free multiset rewriting system over a set of elements $\mathcal{S} = \mathtt{\{ A,B \}}$. The regulation $\zeta$ gives priority to rule $\mu_\mathtt{2}$ over the concurrent rule $\mu_\mathtt{3}$ in the case they are both enabled, which makes sure element \texttt{A} cannot be removed until any \texttt{B}s are present. Please note that the priority does not need to be resolved for every concurrent pair or rules (e.g. $\mu_\mathtt{1}$ and $\mu_\mathtt{2}$).}\label{cfr_example}
\end{figure}

\begin{figure}[!h]
\begin{center}
\begin{tikzpicture}

\node[state] (s0) {$\mathtt{A}$};
\node[state, right=1cm of s0] (s1) {$\mathtt{A,B}$};
\node[state, right=1cm of s1] (s2) {$\mathtt{A}$};
\node[state, right=1cm of s2] (s3) {$\mathtt{A,B}$};
\node[state, right=1cm of s3] (s4) {$\mathtt{A}$};
\node[state, right=1cm of s4] (s5) {};
\node[right=1cm of s5] (D) {$\ldots \pi_\mathtt{1}$ \cmark};

\draw[shorten >=1mm,shorten <=1mm]
	    (s0) edge[above] node[label_node]{$\mu_\mathtt{1}$} (s1)
		(s1) edge[above] node[label_node]{$\mu_\mathtt{2}$} (s2)
		(s2) edge[above] node[label_node]{$\mu_\mathtt{1}$} (s3)
		(s3) edge[above] node[label_node]{$\mu_\mathtt{2}$} (s4)
    (s4) edge[above] node[label_node]{$\mu_\mathtt{3}$} (s5)
		(s5) edge[above] node[label_node]{$\varepsilon$} (D);
\end{tikzpicture}

\begin{tikzpicture}

\node[state] (s0) {$\mathtt{A}$};
\node[state, right=1cm of s0] (s1) {$\mathtt{A,B}$};
\node[state, right=1cm of s1] (s2) {$\mathtt{A,B,B}$};
\node[state, right=1cm of s2] (s3) {$\mathtt{A,B}$};
\node[state, right=1cm of s3] (s4) {$\mathtt{B}$};
\node[state, right=1cm of s4] (s5) {$\mathtt{B}$};
\node[right=1cm of s5] (D) {$\ldots \pi_\mathtt{2}$ \xmark};

\draw[shorten >=1mm,shorten <=1mm]
	  (s0) edge[above] node[label_node]{$\mu_\mathtt{1}$} (s1)
		(s1) edge[above] node[label_node]{$\mu_\mathtt{1}$} (s2)
		(s2) edge[above] node[label_node]{$\mu_\mathtt{2}$} (s3)
		(s3) edge[above] node[label_node]{$\color{myred} \mu_\mathtt{3}$} (s4)
    (s4) edge[above] node[label_node]{$\varepsilon$} (s5)
		(s5) edge[above] node[label_node]{$\varepsilon$} (D);
\end{tikzpicture}
\end{center}
\caption{Example of valid and invalid runs of the $\mathbb{CFR}$ system from Figure~\ref{cfr_example}. Run $\pi_\mathtt{1} \in \mathfrak{L}(\overline{\mathcal{M}})$ because rule $\mu_\mathtt{2}$ was used with priority when rule $\mu_\mathtt{3}$ was also enabled. Run $\pi_\mathtt{2} \not\in \mathfrak{L}(\overline{\mathcal{M}})$ the rule $\mu_\mathtt{3}$ was used when the prioritised rule $\mu_\mathtt{2}$ was also enabled.}\label{cfr_runs_example}
\end{figure}
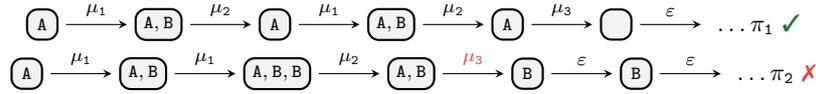

\vspace*{-0.5cm}
\begin{figure}[!h]
\begin{center}
\def\arraystretch{1.2}%
\setlength{\tabcolsep}{15pt}
\begin{tabular}{| c | c | c | c |}
\hline
\textbf{Acronym} & \textbf{Class} & \textbf{Regulation} \\ 
\hline
$\mathbb{RR}$ & regular & run labels belong to $\omega$-regular language \\
\hline
$\mathbb{OR}$ & ordered & subsequent rules are restricted by partial order \\  
\hline
$\mathbb{PR}$ & programmed & successors of rules are explicitly specified \\
\hline
$\mathbb{CR}$ & conditional & rules cannot be used in context of specified elements \\
\hline
$\mathbb{CFR}$ & concurrent-free & priority is given to one of the concurrent rules \\
\hline
\end{tabular}
\end{center}
\caption{Summary of defined regulated multiset rewriting systems.}\label{rmrs_summary}
\end{figure}

\vspace*{-0.5cm}
\begin{figure}[!h]
\begin{center}
\begin{tikzpicture}[-]
  \node (pr) at (-3,1) {$\mathbb{PR}$};
  \node (cr) at (3,1) {$\mathbb{CR}$};
  \node (or) at (-3,0) {$\mathbb{OR}$};
  \node (rr) at (0,0) {$\mathbb{RR}$};
  \node (cfr) at (3,0) {$\mathbb{CFR}$};
  \node (mrs) at (0,-1) {$\mathbb{MRS}$};
  \node (computable) at (6,0.5) {{\color{blue} computable}};
  \node (w_computable) at (-4,-1) {{\color{red} weakly computable}};
  \draw (mrs) -- (or) -- (pr)
  (rr) -- (mrs) -- (cfr) -- (cr) ;
  \node[draw=blue,anchor=west,dashed,inner sep=5pt,fit=(cr.north west)(cfr.south east)] (sq_blue) {};
  \node[draw=red,anchor=east,dashed,inner sep=5pt,fit=(mrs.north west)(mrs.south east)] (sq_red) {};
  \draw[draw=blue,dashed] (computable) -- (sq_blue);
  \draw[draw=red,dashed] (w_computable) -- (sq_red);
\end{tikzpicture}
\end{center}
\caption{Summary of generative power comparison among classes of regulated systems.}\label{summary}
\end{figure}
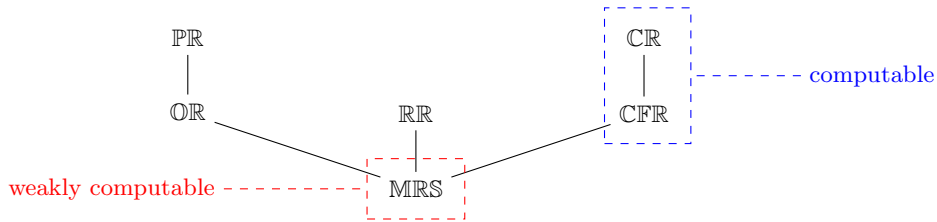

\section{Properties of regulated multiset rewriting systems}

In this section, we state some general properties of the classes of rewriting systems defined in the previous section (summarised in Figure~\ref{rmrs_summary}). First, we compare the classes on the level of generative power, i.e. runs they can generate, and then we discuss their expressive power, i.e. the functions they can compute (both summarised in Figure~\ref{summary}).

We say two runs $\pi, \pi'$ are \emph{equal} if $\forall \mathtt{i} \in \mathbb{N}.~\pi[\mathtt{i}] = \pi'[\mathtt{i}]$ (regardless of the run labels). We say two systems $\mathcal{M}, \mathcal{M}'$ are equal if their corresponding sets of runs $\mathfrak{L}(\mathcal{M}), \mathfrak{L}(\mathcal{M}')$ are equal.

Let $\mathbb{C}$ be an arbitrary class of regulated systems. We define the \emph{generative power} of a class $\mathbb{C}$ as a set $g(\mathbb{C}) = \{ \mathfrak{L}(\mathcal{M}) ~|~ \mathcal{M} \in \mathbb{C} \}$ of all possible sets of runs. We can compare classes on their level of generative power defining the following operators for any two classes $\mathbb{C}_1, \mathbb{C}_2$:

\begin{center}
\begin{minipage}{0.4\textwidth}
\begin{itemize}
	\item $\mathbb{C}_1 \sqsubseteq \mathbb{C}_2$ iff $g(\mathbb{C}_1) \subseteq g(\mathbb{C}_2)$
	\item $\mathbb{C}_1 = \mathbb{C}_2$ iff $\mathbb{C}_1 \sqsubseteq \mathbb{C}_2 \wedge \mathbb{C}_2 \sqsubseteq \mathbb{C}_1$
\end{itemize}
\end{minipage}
\begin{minipage}{0.4\textwidth}
\begin{itemize}
	\item $\mathbb{C}_1 \sqsubset \mathbb{C}_2$ iff $\mathbb{C}_1 \sqsubseteq \mathbb{C}_2 \wedge \mathbb{C}_1 \neq \mathbb{C}_2$
	\item $\mathbb{C}_1  \incomparable  \mathbb{C}_2$ iff $\mathbb{C}_1 \not\sqsubseteq \mathbb{C}_2 \wedge \mathbb{C}_2 \not\sqsubseteq \mathbb{C}_1$
\end{itemize}
\end{minipage}
\end{center}

\subsection{Generative power comparison}

All regulated classes are strictly more generative than non-regulated class.

\begin{theorem}\label{thr_mrs_vs_c}
$\forall~\mathbb{C} \in \{ \mathbb{RR}, \mathbb{OR}, \mathbb{PR}, \mathbb{CR}, \mathbb{CFR}\}.~ \mathbb{MRS} \sqsubset \mathbb{C}$ 
\end{theorem}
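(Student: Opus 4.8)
The plan is to prove the two halves of $\mathbb{MRS}\sqsubset\mathbb{C}$ separately for each of the five regulation types: the inclusion $g(\mathbb{MRS})\subseteq g(\mathbb{C})$ and the strictness $g(\mathbb{C})\not\subseteq g(\mathbb{MRS})$. The inclusion is the routine half: for an arbitrary MRS $\mathcal{M}=(\mathcal{X},\mathtt{M_0})$ I would equip it with a \emph{neutral} regulation that removes no run — the empty strict partial order for $\mathbb{OR}$, the total successor function $\zeta(\mu)=\mathcal{X}$ for all $\mu\in\mathcal{X}$ for $\mathbb{PR}$, the all-empty conditional function $\zeta(\mu)=\emptyset$ for $\mathbb{CR}$, the empty relation for $\mathbb{CFR}$, and for $\mathbb{RR}$ the $\omega$-regular language $\mathcal{X}^\omega$ (or, if the ``no extra words'' clause is taken to be part of the class definition, the $\omega$-language of all run labels of $\mathcal{M}$). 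In each case one checks straight from the definition that the side condition characterising $\mathfrak{L}(\overline{\mathcal{M}})$ holds for every run of $\mathcal{M}$, so $\mathfrak{L}(\overline{\mathcal{M}})=\mathfrak{L}(\mathcal{M})$ and hence $\mathfrak{L}(\mathcal{M})\in g(\mathbb{C})$.

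The strictness half is the substantive part, and I would reduce it to a structural characterisation of $g(\mathbb{MRS})$. The key observation is that for a plain MRS $\mathcal{M}'=(\mathcal{X}',\mathtt{M_0'})$ the set $\mathfrak{L}(\mathcal{M}')$ is exactly the set of all infinite paths starting in $\mathtt{M_0'}$ in the ``one-step'' graph whose edges are the pairs $(\mathtt{M},\mathtt{M}')$ such that $^\bullet\nu\subseteq\mathtt{M}$ and $\mathtt{M}'=(\mathtt{M}\setminus{}^\bullet\nu)\cup\nu^\bullet$ for some $\nu\in\mathcal{X}'$ (the inclusion ``$\supseteq$'' uses that every state has a successor, thanks to $\varepsilon$). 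This yields two necessary properties of any run set in $g(\mathbb{MRS})$: (i) \emph{history-independence} — if a step $\mathtt{M}\to\mathtt{M}'$ occurs in some run, it may be taken from \emph{every} occurrence of $\mathtt{M}$ in any run and then continued to an infinite run; and (ii) \emph{monotonicity} — if $\nu$ witnesses a step from a reachable state $\mathtt{M}$, then $\nu$ is enabled at every reachable $\mathtt{M}''\supseteq\mathtt{M}$, so the step $\mathtt{M}''\to(\mathtt{M}''\setminus{}^\bullet\nu)\cup\nu^\bullet$ is also available. I would isolate (i) and (ii) as a lemma and then exhibit, for each class, a regulated system whose run set violates one of them.

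Concretely: for $\mathbb{OR}$, $\mathbb{PR}$ and $\mathbb{RR}$ — whose regulations depend on the \emph{history} of applied rules — I would take a small system (in the spirit of Figures~\ref{or_example}, \ref{pr_example}, \ref{rr_example}) in which two runs both pass through a common state $\mathtt{N}$ but reach it via different rules, so that the regulation permits a rule $\nu$ (hence the step $\mathtt{N}\to\mathtt{N}'$) after one history but forbids it after the other, while $\nu$ is the \emph{only} rule realising $\mathtt{N}\to\mathtt{N}'$; e.g.\ with $\mu_1\colon\{\mathtt{A}\}\to\{\mathtt{A,B}\}$, $\mu_2\colon\{\mathtt{B}\}\to\emptyset$, $\mathtt{M_0}=\{\mathtt{A}\}$ and order $\mu_1<\mu_2$, the step $\{\mathtt{A,B}\}\to\{\mathtt{A,B,B}\}$ occurs in the run that repeatedly applies $\mu_1$ but is unavailable from the occurrence of $\{\mathtt{A,B}\}$ reached by $\mu_2$, contradicting (i). For $\mathbb{CR}$ and $\mathbb{CFR}$ — whose regulations are history-free but context-sensitive — I would use (ii) instead: the systems of Figures~\ref{cr_example} and~\ref{cfr_example} reach the initial state (resp.\ a state containing $\mathtt{A}$), which forces any plain realisation to contain a rule with empty (resp.\ $\{\mathtt{A}\}$) left-hand side that produces $\mathtt{A}$ (resp.\ deletes $\mathtt{A}$); that rule is then enabled at the reachable state $\{\mathtt{B}\}$ (resp.\ $\{\mathtt{A,B}\}$) and yields a step that the regulation explicitly blocks — a contradiction. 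In every case the exhibited run set lies in $g(\mathbb{C})\setminus g(\mathbb{MRS})$, which together with the inclusion gives $\mathbb{MRS}\sqsubset\mathbb{C}$.

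The main obstacle I expect is the $\mathbb{RR}$ case. On the inclusion side the ``no extra words'' requirement rules out the naive $\zeta=\mathcal{X}^\omega$ (it contains, for instance, $\varepsilon^\omega$ even when $\varepsilon$ is never enabled), so one must argue that the run-label language itself — which need not be $\omega$-regular for an arbitrary MRS — or a suitable surrogate is admissible; nailing down the intended formulation is the delicate point. On the strictness side one must pick an $\omega$-regular $\zeta$ that simultaneously has no extra words and forbids a genuine cycle-traversal of the one-step graph, which takes some bookkeeping. A secondary point to settle throughout is the precise reading of $\varepsilon$ in the regulated setting — namely that $\varepsilon$ is applicable exactly when no rule is at once enabled and permitted by $\zeta$, as the figures implicitly use — and I would choose the witness systems so that the decisive unavailability is visible at states where ordinary rules are enabled, keeping the argument robust with respect to that convention.
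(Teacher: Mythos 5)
Your overall strategy matches the paper's: the inclusion half is handled by exactly the same neutral regulations ($\mathcal{X}^\omega$, the empty order, the total successor function, the empty conditional function, the empty relation), and the strictness half by exhibiting, for each class, a regulated system whose run set no plain MRS can reproduce. Where you genuinely diverge is in how the strictness half is organised. The paper works out only the $\mathbb{OR}$ case (the forced presence of $\mu_\mathtt{1},\mu_\mathtt{2}$ in any candidate $\mathcal{X}$, then a run applying $\mu_\mathtt{1}$ after $\mu_\mathtt{2}$) and asserts that the remaining four cases follow "by the same idea"; you instead isolate an explicit characterisation of $g(\mathbb{MRS})$ -- a run set of a plain MRS is exactly the set of infinite paths from $\mathtt{M_0}$ in a state-determined one-step graph -- and derive from it the two necessary conditions (history-independence and monotonicity) that your five witnesses violate. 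This buys a uniform, checkable argument for all five classes where the paper has one worked example and a gesture; your concrete instantiations for $\mathbb{CR}$ and $\mathbb{CFR}$ (the forced rule with left-hand side $\emptyset$, resp.\ $\{\mathtt{A}\}$, remaining enabled at $\{\mathtt{B}\}$, resp.\ $\{\mathtt{A,B}\}$) check out. You also flag a real issue the paper glosses over: with the "no extra words" condition read literally, $\zeta_\mathtt{0}=\mathcal{X}^\omega$ is not a legitimate regular regulation (it contains words that are not run labels of any run), and the only $\zeta$ satisfying both conditions is the exact run-label language, which for an unbounded MRS need not be $\omega$-regular. That observation is a criticism of the paper's definition rather than a gap in your argument, but you are right that the $\mathbb{RR}$ inclusion is the one place where the proof as the paper states it does not go through without either relaxing that side condition or restricting attention to systems whose run-label language is $\omega$-regular; the remainder of your proposal is sound as sketched.
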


\begin{proof}
To show that $\mathbb{MRS} \sqsubseteq \mathbb{C}$, we show that there exists a \emph{neutral} regulation $\zeta_\mathtt{0}$ for any class $\mathbb{C}$ such that $\forall \mathcal{M} \in \mathbb{MRS} ~\exists \overline{\mathcal{M}} \in \mathbb{C}$ with $\overline{\zeta} = \zeta_\mathtt{0}$ such that $\mathfrak{L}(\mathcal{M}) = \mathfrak{L}(\overline{\mathcal{M}})$. In other words, the neutral regulation places no restriction on rules application. The neutral regulations for individual classes are defined as follows:

\vspace{-0.1cm}
\begin{center}
\begin{minipage}{0.4\textwidth}
\begin{itemize}
	\item $\mathbb{RR}: \zeta_\mathtt{0} = \mathcal{X}^\omega$
	\item $\mathbb{OR}: \zeta_\mathtt{0} = \emptyset$
	\item $\mathbb{PR}: \zeta_\mathtt{0} = \{ \mu \to \mathcal{X} ~|~ \mu \in \mathcal{X} \} $
\end{itemize}
\end{minipage}
\begin{minipage}{0.4\textwidth}
\begin{itemize}
	\item $\mathbb{CR}: \zeta_\mathtt{0} = \{ \mu \to \emptyset ~|~ \mu \in \mathcal{X} \} $
	\item $\mathbb{CFR}: \zeta_\mathtt{0} = \emptyset$
\end{itemize}
\end{minipage}
\end{center}
\vspace{-0.1cm}

To show the strictness of the relations, we find an $\overline{\mathcal{M}} \in \mathbb{C}$ such that $\forall \mathcal{M} \in \mathbb{MRS}$ holds $\mathfrak{L}(\overline{\mathcal{M}}) \neq \mathfrak{L}(\mathcal{M})$. We only show that $\mathbb{MRS} \sqsubset \mathbb{OR}$, and we skip other proofs as they are based on the same idea of showing that their respective example (used above in the class definition) does not have a representative in $\mathbb{MRS}$ with an equal set of runs.

We use $\overline{\mathcal{M}} \in \mathbb{OR}$ from Figure~\ref{or_example} and the example of valid run $\pi_\mathtt{1}$ from Figure~\ref{or_runs_example}. The rule $\mu_\mathtt{1}$ applied in step one and $\mu_\mathtt{2}$ applied in step four enforce that $\mu_\mathtt{1}, \mu_\mathtt{2} \in \mathcal{X}$ for the $\mathcal{M} \in \mathbb{MRS}$ (because there is no other way these transitions could happen). However, then a run $\pi' \in \mathfrak{L}(\mathcal{M})$:

{\centering
\begin{tikzpicture}

\node[state] (s0) {};
\node[state, right=1cm of s0] (s1) {$\mathtt{A}$};
\node[state, right=1cm of s1] (s2) {};
\node[state, right=1cm of s2] (s3) {$\mathtt{A}$};
\node[state, right=1cm of s3] (s4) {$\mathtt{A,A}$};
\node[right=1cm of s4] (D) {$\ldots$};

\draw[shorten >=1mm,shorten <=1mm]
	    (s0) edge[above] node[label_node]{$\mu_\mathtt{1}$} (s1)
		(s1) edge[above] node[label_node]{$\mu_\mathtt{2}$} (s2)
		(s2) edge[above] node[label_node]{$\mu_\mathtt{1}$} (s3)
		(s3) edge[above] node[label_node]{$\mu_\mathtt{1}$} (s4)
		(s4) edge[above] node[label_node]{$\mu_\mathtt{1}$} (D);
\end{tikzpicture}
\\}

\noindent is possible, but clearly run $\pi' \not\in \mathfrak{L}(\overline{\mathcal{M}})$. From this follows that $\mathfrak{L}(\overline{\mathcal{M}}) \neq \mathfrak{L}(\mathcal{M})$ and thus the strictness of inclusion. \qed
\end{proof}

From Figure~\ref{thr_mrs_vs_c} also follows that all regulated classes have a common intersection in $\mathbb{MRS}$. Now we show some relations among regulated classes. These are mostly incomparable, as we will show below. However, there are also two pairs of classes, namely $\mathbb{OR} \sqsubset \mathbb{PR}$ and $\mathbb{CFR} \sqsubset \mathbb{CR}$, which have a subclass relation.

\begin{theorem}\label{thr_or_vs_pr}
$\mathbb{OR} \sqsubset \mathbb{PR}$
\end{theorem}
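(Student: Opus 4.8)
The plan is to prove the two directions separately: the inclusion $\mathbb{OR} \sqsubseteq \mathbb{PR}$ by a direct translation of regulations, and the strictness $\mathbb{PR} \not\sqsubseteq \mathbb{OR}$ by exhibiting a concrete $\mathbb{PR}$ system whose set of runs no $\mathbb{OR}$ system can reproduce.

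For the inclusion, I would take an arbitrary ordered system $\overline{\mathcal{M}} = (\mathcal{X}, \mathtt{M_0}, \zeta)$ with strict partial order $\zeta$ and build a programmed system over the \emph{same} rules and initial multiset, with successor function $\zeta'(\mu) = \{ \mu' \in \mathcal{X} \mid (\mu', \mu) \notin \zeta \}$, i.e.\ $\zeta'(\mu)$ collects exactly the rules that may legally follow $\mu$ under the order. Then for every run the programmed condition $\overrightarrow{\pi}[\mathtt{i+1}] \in \zeta'(\overrightarrow{\pi}[\mathtt{i}])$ is, by construction, literally the ordered condition $\overrightarrow{\pi}[\mathtt{i+1}] \not< \overrightarrow{\pi}[\mathtt{i}]$, so both systems reduce $\mathfrak{L}(\mathcal{M})$ to the very same set and $\mathfrak{L}(\overline{\mathcal{M}})$ is witnessed in $\mathbb{PR}$. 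This step is routine; the only thing worth noting is that $\varepsilon$ is handled uniformly, since on both sides it is treated as an ordinary element of $\mathcal{X}$.

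For strictness I would reuse the $\mathbb{PR}$ system $\overline{\mathcal{M}}$ of Figure~\ref{pr_example}, with $\mathcal{X} = \{\mu_\mathtt{1}: \emptyset \to \mathtt{\{A\}},\ \mu_\mathtt{2}: \emptyset \to \mathtt{\{B\}}\}$, $\mathtt{M_0} = \emptyset$, and $\zeta$ forcing strict alternation. Since $^{\bullet}\mu_\mathtt{1} = {}^{\bullet}\mu_\mathtt{2} = \emptyset$, both rules are always enabled, so $\varepsilon$ never fires and $\mathfrak{L}(\overline{\mathcal{M}})$ consists of exactly the two runs with labels $(\mu_\mathtt{1}\mu_\mathtt{2})^\omega$ and $(\mu_\mathtt{2}\mu_\mathtt{1})^\omega$; in particular $\pi[\mathtt{2}] = \mathtt{\{A,B\}}$ for both of them. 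Now suppose some ordered $\mathcal{M} = (\mathcal{X}', \mathtt{M_0}', \zeta')$ satisfied $\mathfrak{L}(\mathcal{M}) = \mathfrak{L}(\overline{\mathcal{M}})$. The first step $\emptyset \to \mathtt{\{A\}}$ of the first run forces $\mathtt{M_0}' = \emptyset$ and forces $\mathcal{X}'$ to contain the rule $\nu := (\emptyset, \mathtt{\{A\}})$, the \emph{only} rule realizing this transition (from $^{\bullet}\nu \subseteq \emptyset$ we get $^{\bullet}\nu = \emptyset$, hence $\nu^{\bullet} = \mathtt{\{A\}}$). But then $\nu$ is enabled in every state, and since $\zeta'$ is a \emph{strict} — hence irreflexive — partial order, $\nu \not< \nu$; therefore applying $\nu$ at every step yields a legal run of $\mathcal{M}$ with label $\nu^\omega$ and second state $\mathtt{\{A,A\}} \neq \mathtt{\{A,B\}}$. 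This run is not in $\mathfrak{L}(\overline{\mathcal{M}})$, contradicting $\mathfrak{L}(\mathcal{M}) = \mathfrak{L}(\overline{\mathcal{M}})$. As $\mathcal{M} \in \mathbb{OR}$ was arbitrary, $\mathfrak{L}(\overline{\mathcal{M}}) \notin g(\mathbb{OR})$, so $\mathbb{PR} \not\sqsubseteq \mathbb{OR}$; together with the inclusion this gives $\mathbb{OR} \sqsubset \mathbb{PR}$.

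The main obstacle is the strictness half, and its crux is a single structural observation: a strict partial order is irreflexive and therefore can never forbid a rule from immediately following itself, whereas a successor function in $\mathbb{PR}$ can exclude $\mu$ from $\zeta(\mu)$. Pinning this down rigorously requires the small but essential argument that the transition $\emptyset \to \mathtt{\{A\}}$ forces a unique rule into $\mathcal{X}'$; everything else (the translation, the enumeration of the two runs of $\overline{\mathcal{M}}$, and the verification that $\nu^\omega$ is a valid run) is mechanical.
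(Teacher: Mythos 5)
Your proposal is correct, and both halves line up with the paper's strategy: the inclusion uses the identical translation $\zeta'(\mu) = \{\mu' \in \mathcal{X} \mid (\mu',\mu) \notin \zeta\}$, and the strictness uses the same witness, the alternating $\mathbb{PR}$ system of Figure~\ref{pr_example}. The one genuine difference is in how you finish the strictness argument. The paper, after forcing $\mu_\mathtt{1}, \mu_\mathtt{2} \in \mathcal{X}'$, splits into three cases according to whether the two rules are incomparable or ordered one way or the other, and exhibits a different illegal run for each case. You instead observe that a strict partial order is irreflexive, so no $\mathbb{OR}$ regulation can ever forbid a rule from immediately following itself; hence the single run with label $\nu^\omega$ (for the forced rule $\nu = (\emptyset, \mathtt{\{A\}})$) is legal in \emph{every} candidate ordered system, while its second state $\mathtt{\{A,A\}}$ already separates it from both runs of the $\mathbb{PR}$ system. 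This is a uniform argument that subsumes the paper's case analysis and isolates the real structural reason for the separation — an $\mathbb{OR}$ regulation cannot prevent immediate self-succession, whereas a $\mathbb{PR}$ successor function can exclude $\mu$ from $\zeta(\mu)$. Your explicit justification that $(\emptyset, \mathtt{\{A\}})$ is the \emph{unique} rule realizing the transition $\emptyset \to \mathtt{\{A\}}$, and that $\varepsilon$ never fires because both rules are always enabled, are details the paper leaves implicit; including them is a small improvement in rigor.
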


\begin{proof}
To show that $\mathbb{OR} \sqsubseteq \mathbb{PR}$, we describe how for any $\mathcal{M} \in \mathbb{OR}$ an equivalent system $\mathcal{M}' \in \mathbb{PR}$ can be constructed. The partial order $\zeta$ defined on the rules does not allow a rule lower in the order to be used immediately after a rule higher in the order. In other words, rules higher in the order or those incomparable \emph{can} be applied. Using this fact we construct successor function $\zeta'(\mu) = \{~ \bar{\mu} \in \mathcal{X} ~|~ (\bar{\mu}, \mu) \not\in \zeta ~\}$ for all $\mu \in \mathcal{X}$ and $\mathcal{X}' = \mathcal{X}$, $\mathtt{M'_0} = \mathtt{M_0}$.

To prove that $\mathfrak{L}(\mathcal{M}) = \mathfrak{L}(\mathcal{M}')$, we must show that for any state $\mathtt{M}$ and rule $\mu \in \mathcal{X}$ it holds that $\mathcal{M}$ can apply $\mu$ to $\mathtt{M}$ $\Leftrightarrow$ $\mathcal{M'}$ can apply $\mu$ to $\mathtt{M}$. This is trivial for the initial state $\mathtt{M_0}$ since neither of the systems set any restrictions on rule application in the first step (from the definition). For any other general case 
\begin{tikzpicture}[baseline=0mm]

\node (s0) {$\ldots$};
\node[state, right=1cm of s0] (s1) {$\mathtt{M}$};
\node[right=1cm of s1] (D) {$\ldots$};

\draw[shorten >=1mm,shorten <=1mm]
	    (s0) edge[above] node[label_node]{$\mu_\mathtt{pre}$} (s1)
		(s1) edge[above] node[label_node]{$\mu$} (D);
\end{tikzpicture}
with an enabled rule $\mu$:

\begin{itemize}
	\item[$\Rightarrow$:] If $\mathcal{M}$ can apply $\mu$ to $\mathtt{M}$, that means that $(\mu, \mu_\mathtt{pre}) \not\in \zeta$. From that and definition of $\zeta'$ follows that $\mu \in \zeta'(\mu_\mathtt{pre})$. Therefore, $\mathcal{M}'$ also can apply $\mu$ to $\mathtt{M}$.

	\vspace{0.1cm}

	If $\mathcal{M}$ can \emph{not} apply $\mu$ to $\mathtt{M}$, that means that $(\mu, \mu_\mathtt{pre}) \in \zeta$. From that and definition of $\zeta'$ follows that $\mu \not\in \zeta'(\mu_\mathtt{pre})$. Therefore, $\mathcal{M}'$ also can \emph{not} apply $\mu$ to $\mathtt{M}$.

	\vspace{0.1cm}

	\item[$\Leftarrow$:] If $\mathcal{M}'$ can apply $\mu$ to $\mathtt{M}$, that means that $\mu \in \zeta'(\mu_\mathtt{pre})$. From the definition of $\zeta'$ follows that $(\mu, \mu_\mathtt{pre}) \not\in \zeta$. Therefore, $\mathcal{M}$ also can apply $\mu$ to $\mathtt{M}$.

	\vspace{0.1cm}

	If $\mathcal{M}'$ can \emph{not} apply $\mu$ to $\mathtt{M}$, that means that $\mu \not\in \zeta'(\mu_\mathtt{pre})$. From the definition of $\zeta'$ follows that $(\mu, \mu_\mathtt{pre}) \in \zeta$. Therefore, $\mathcal{M}$ also can \emph{not} apply $\mu$ to $\mathtt{M}$.
\end{itemize}

To show the strictness of the inclusion, we find an $\mathcal{M} \in \mathbb{PR}$ such that $\forall \mathcal{M}' \in \mathbb{OR}$ holds $\mathfrak{L}(\mathcal{M}) \neq \mathfrak{L}(\mathcal{M}')$. We use $\mathcal{M} \in \mathbb{PR}$ from Figure~\ref{pr_example}. The example of a valid run $\pi_\mathtt{1}$ from Figure~\ref{pr_runs_example} enforces that $\mu_\mathtt{1} \in \mathcal{X}'$ due to its first step. The symmetric run $\pi'_1$:

{\centering
\begin{tikzpicture}

\node[state] (s0) {};
\node[state, right=1cm of s0] (s1) {$\mathtt{B}$};
\node[state, right=1cm of s1] (s2) {$\mathtt{A,B}$};
\node[state, right=1cm of s2] (s3) {$\mathtt{A,B,B}$};
\node[state, right=1cm of s3] (s4) {$\mathtt{A,A,B,B}$};
\node[right=1cm of s4] (D) {$\ldots \not\in \mathfrak{L}(\mathcal{M})$};

\draw[shorten >=1mm,shorten <=1mm]
	    (s0) edge[above] node[label_node]{$\mu_\mathtt{2}$} (s1)
		(s1) edge[above] node[label_node]{$\mu_\mathtt{1}$} (s2)
		(s2) edge[above] node[label_node]{$\mu_\mathtt{2}$} (s3)
		(s3) edge[above] node[label_node]{$\mu_\mathtt{1}$} (s4)
		(s4) edge[above] node[label_node]{$\mu_\mathtt{2}$} (D);
\end{tikzpicture}
\\}

\noindent enforces that $\mu_\mathtt{2} \in \mathcal{X}'$ also due to its first step. Then, for the ordered system $\mathcal{M}'$, there are three options w.r.t. $\zeta'$ how these two rules can be related (regardless of other rules in $\mathcal{X}'$):

\vspace*{0.3cm}

\hspace{1.8cm}\begin{tabularx}{1\textwidth}{m{3cm} m{3cm}}
\multicolumn{1}{c}{\makecell{$\mu_\mathtt{1}, \mu_\mathtt{2}$ are \\ incomparable}} & 
\begin{tikzpicture}
\node[state] (s0) {};
\node[state, right=1cm of s0] (s1) {$\mathtt{A}$};
\node[state, right=1cm of s1] (s2) {$\mathtt{A,A}$};
\node[state, right=1cm of s2] (s3) {$\mathtt{A,A,A}$};
\node[right=1cm of s3] (D) {$\ldots \not\in \mathfrak{L}(\mathcal{M})$};
\draw[shorten >=1mm,shorten <=1mm]
	    (s0) edge[above] node[label_node]{$\mu_\mathtt{1}$} (s1)
		(s1) edge[above] node[label_node]{$\mu_\mathtt{1}$} (s2)
		(s2) edge[above] node[label_node]{$\mu_\mathtt{1}$} (s3)
		(s3) edge[above] node[label_node]{$\mu_\mathtt{1}$} (D);
\end{tikzpicture} \\
\multicolumn{1}{c}{$\mu_\mathtt{1} < \mu_\mathtt{2}$} & 
\begin{tikzpicture}
\node[state] (s0) {};
\node[state, right=1cm of s0] (s1) {$\mathtt{B}$};
\node[state, right=1cm of s1] (s2) {$\mathtt{A,B}$};
\node[state, right=1cm of s2] (s3) {$\mathtt{A,A,B}$};
\node[right=1cm of s3] (D) {$\ldots \not\in \mathfrak{L}(\mathcal{M})$};
\draw[shorten >=1mm,shorten <=1mm]
	    (s0) edge[above] node[label_node]{$\mu_\mathtt{2}$} (s1)
		(s1) edge[above] node[label_node]{$\mu_\mathtt{1}$} (s2)
		(s2) edge[above] node[label_node]{$\mu_\mathtt{1}$} (s3)
		(s3) edge[above] node[label_node]{$\mu_\mathtt{1}$} (D);
\end{tikzpicture} \\
\multicolumn{1}{c}{$\mu_\mathtt{2} < \mu_\mathtt{1}$} & 
\begin{tikzpicture}
\node[state] (s0) {};
\node[state, right=1cm of s0] (s1) {$\mathtt{A}$};
\node[state, right=1cm of s1] (s2) {$\mathtt{A,A}$};
\node[state, right=1cm of s2] (s3) {$\mathtt{A,A,B}$};
\node[right=1cm of s3] (D) {$\ldots \not\in \mathfrak{L}(\mathcal{M})$};
\draw[shorten >=1mm,shorten <=1mm]
	    (s0) edge[above] node[label_node]{$\mu_\mathtt{1}$} (s1)
		(s1) edge[above] node[label_node]{$\mu_\mathtt{1}$} (s2)
		(s2) edge[above] node[label_node]{$\mu_\mathtt{2}$} (s3)
		(s3) edge[above] node[label_node]{$\mu_\mathtt{2}$} (D);
\end{tikzpicture} \\

\end{tabularx}

\vspace*{0.3cm}

\noindent for each case, there is a run valid in $\mathfrak{L}(\mathcal{M}')$ which is invalid in $\mathfrak{L}(\mathcal{M})$, i.e. $\mathfrak{L}(\mathcal{M}) \neq \mathfrak{L}(\mathcal{M}')$. \qed  
\end{proof}

\begin{theorem}\label{thr_cfr_vs_cr}
$\mathbb{CFR} \sqsubset \mathbb{CR}$
\end{theorem}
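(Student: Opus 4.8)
For strictness we will use the $\mathbb{CR}$ system of Figure~\ref{cr_example}.

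The plan is to prove the two parts of $\mathbb{CFR} \sqsubset \mathbb{CR}$ in turn: the inclusion $\mathbb{CFR} \sqsubseteq \mathbb{CR}$ by directly simulating a concurrent-free regulation with a conditional one, and the strictness $\mathbb{CR} \not\sqsubseteq \mathbb{CFR}$ by showing that the $\mathbb{CR}$ system of Figure~\ref{cr_example} has no equally-behaving $\mathbb{CFR}$ counterpart.

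For the inclusion, given $\mathcal{M} = (\mathcal{X}, \mathtt{M_0}, \zeta) \in \mathbb{CFR}$ I would keep the rules and the initial state and define the conditional regulation $\zeta'(\mu) = \{\, {}^{\bullet}\mu' ~|~ (\mu,\mu') \in \zeta \,\}$ for every $\mu \in \mathcal{X}$, obtaining $\mathcal{M}' = (\mathcal{X}, \mathtt{M_0}, \zeta') \in \mathbb{CR}$. The key observation is that the concurrent-free regulation forbids applying $\mu$ at a state $\mathtt{M}$ exactly when some $\mu'$ with $(\mu,\mu') \in \zeta$ is enabled at $\mathtt{M}$, i.e.\ when ${}^{\bullet}\mu' \subseteq \mathtt{M}$ for some such $\mu'$ --- and this is precisely the conditional regulation's requirement that some prohibited context from $\zeta'(\mu)$ be contained in $\mathtt{M}$. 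I would then establish $\mathfrak{L}(\mathcal{M}) = \mathfrak{L}(\mathcal{M}')$ by checking, step by step along an arbitrary run, that the rules permitted at the current state agree under $\zeta$ and under $\zeta'$; this comes down to the remark that ${}^{\bullet}\mu' \subseteq \pi[\mathtt{i}]$ holds exactly when $\mu'$ is enabled at $\pi[\mathtt{i}]$. The only points needing attention are the first rule application (unrestricted on both sides, as argued in the proof of Theorem~\ref{thr_or_vs_pr}) and the empty rule: since ${}^{\bullet}\varepsilon = \emptyset$ meets no left-hand side, $\varepsilon$ is concurrent with no rule, hence appears in no pair of $\zeta$ and has $\zeta'(\varepsilon) = \emptyset$, so it remains unrestricted in both systems.

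For strictness I would take $\mathcal{M} \in \mathbb{CR}$ from Figure~\ref{cr_example} and argue that no $\mathcal{M}' \in \mathbb{CFR}$ has $\mathfrak{L}(\mathcal{M}') = \mathfrak{L}(\mathcal{M})$. The structural fact I would rely on is that a rule with empty left-hand side can never be concurrent with any rule, hence occurs in no pair of a concurrent-free regulation, hence is never forbidden; being moreover always enabled, such a rule may be appended indefinitely to any valid prefix of a run of a $\mathbb{CFR}$ system. Suppose such an $\mathcal{M}'$ existed; then $\mathtt{M'_0} = \emptyset$. The valid run $\pi_\mathtt{1}$ of Figure~\ref{cr_runs_example} lies in $\mathfrak{L}(\mathcal{M}) = \mathfrak{L}(\mathcal{M}')$, and its first transition $\emptyset \to \mathtt{\{A\}}$ can only be realised by a rule $\rho$ with ${}^{\bullet}\rho = \emptyset$ and $\rho^{\bullet} = \mathtt{\{A\}}$, so $\rho \in \mathcal{X}'$. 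By the structural fact, $\mathcal{M}'$ then also admits the run that agrees with $\pi_\mathtt{1}$ up to the state $\mathtt{\{A,B\}}$ and continues $\mathtt{\{A,B\}}, \mathtt{\{A,A,B\}}, \mathtt{\{A,A,A,B\}}, \ldots$ by iterating $\rho$. But this run is not in $\mathfrak{L}(\mathcal{M})$: from $\mathtt{\{A,B\}}$ the system $\mathcal{M}$ can only proceed via $\mu_\mathtt{2}$ to $\mathtt{\{B,B\}}$, because $\mu_\mathtt{1}$ is forbidden there (its prohibited context $\mathtt{\{B\}}$ is present) and $\varepsilon$ is blocked (as $\mu_\mathtt{2}$ is enabled). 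This contradiction gives $\mathbb{CR} \not\sqsubseteq \mathbb{CFR}$, and together with the inclusion we obtain $\mathbb{CFR} \sqsubset \mathbb{CR}$.

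I expect the inclusion to be routine once the correspondence between ``a prioritised concurrent rule is enabled'' and ``its left-hand side is present as a forbidden context'' has been spotted; the only delicate bookkeeping there concerns the first step and the empty rule. The main obstacle is the strictness part: one has to recognise that a concurrent-free regulation is structurally incapable of restricting a rule with empty left-hand side, and then turn this into a concrete pair of runs --- one legal in $\mathbb{CR}$, one its forced $\mathbb{CFR}$ divergence at the state $\mathtt{\{A,B\}}$ --- that no $\mathbb{CFR}$ system can reconcile with the semantics of the $\mathbb{CR}$ example.
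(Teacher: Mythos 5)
Your proof is correct and follows essentially the same route as the paper's: the inclusion is obtained by turning the left-hand side of each prioritised concurrent rule into a prohibited context (your uniform $\zeta'(\mu)=\{\, {}^{\bullet}\mu' \mid (\mu,\mu')\in\zeta \,\}$ subsumes the paper's two-case construction, which separately deletes never-applicable rules), and strictness rests on the same observation that rules with empty left-hand sides are concurrent with nothing and hence cannot be restricted by any $\mathbb{CFR}$ regulation. The paper uses a purpose-built two-rule mutual-exclusion example as its strictness witness where you reuse the system of Figure~\ref{cr_example}, but the argument is the same.
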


\begin{proof}

To show that $\mathbb{CFR} \sqsubseteq \mathbb{CR}$, we describe how for any $\mathcal{M} \in \mathbb{CFR}$ an equivalent system $\mathcal{M'} \in \mathbb{CR}$ can be constructed. First, we set $\mathtt{M'_0} = \mathtt{M_0}$, $\mathcal{X}' = \mathcal{X}$, and $\zeta'(\mu) = \emptyset$ for all $\mu \in \mathcal{X}'$. Then, we investigate each pair $(\mu, \bar{\mu})$ in the relation $\zeta$ (i.e. $\mu, \bar{\mu}$ are concurrent) individually:

\begin{enumerate}
	\item if $^{\bullet}\mu \supseteq \hspace{0.01cm} ^{\bullet}\bar{\mu}$ \label{crf_to_cr_ad1}

	-- rule $\mu$ can never be used $\Rightarrow$ remove it from $\mathcal{X}'$
	\item if $^{\bullet}\mu \subset \hspace{0.01cm} ^{\bullet}\bar{\mu}$ ~or~ $^{\bullet}\mu, \hspace{0.01cm} ^{\bullet}\bar{\mu}$ are incomparable \label{crf_to_cr_ad2}

	-- rule $\mu$ can never be used when $\bar{\mu}$ is enabled $\Rightarrow$ $\zeta'(\mu) = \{^{\bullet}\bar{\mu}\} \cup \zeta'(\mu)$
\end{enumerate}

To prove that $\mathfrak{L}(\mathcal{M}) = \mathfrak{L}(\mathcal{M}')$, we must show that for any state $\mathtt{M}$ and enabled rule $\mu \in \mathcal{X}$ it holds that $\mathcal{M}$ can apply $\mu$ to $\mathtt{M}$ $\Leftrightarrow$ $\mathcal{M'}$ can apply $\mu$ to $\mathtt{M}$:

\begin{itemize}
	\item[$\Rightarrow$:] If $\mathcal{M}$ can apply $\mu$ to $\mathtt{M}$, that means that $\forall \mu' \in \mathcal{X} ~\mathtt{enabled~at}~ \mathtt{M} .~ (\mu, \mu') \not\in \zeta$. No forbidden context was introduced for rule $\mu$, therefore also $\mathcal{M'}$ can apply $\mu$ to $\mathtt{M}$.

	\vspace{0.1cm}

	If $\mathcal{M}$ can \emph{not} apply $\mu$ to $\mathtt{M}$, that means that $\exists \mu' \in \mathcal{X} ~\mathtt{enabled~at}~ \mathtt{M} .~ (\mu, \mu') \in \zeta$. In case (1), rule $\mu$ is removed from $\mathcal{X}'$ and therefore also $\mathcal{M'}$ can \emph{not} apply $\mu$ to $\mathtt{M}$. In case (2), the new forbidden context $^{\bullet}\mu'$ is introduced for rule $\mu$; since $\mu'$ is obviously enabled, it means that also $^{\bullet}\mu' \subset \mathtt{M}$ and therefore $\mathcal{M'}$ also can \emph{not} apply $\mu$ to $\mathtt{M}$.

	\vspace{0.1cm}

	\item[$\Leftarrow$:] If $\mathcal{M'}$ can apply $\mu$ to $\mathtt{M}$, that means that $\forall \mathcal{A} \in \zeta'(\mu)$ holds that $ \mathcal{A} \not\subseteq \mathtt{M}$. Since for case (1) obviously $\mu \in \mathcal{X}'$ and there was no forbidden context introduced in case (2), there is no concurrent rule $\mu' \in \mathcal{X}$ with $(\mu, \mu') \in \zeta$. Therefore also $\mathcal{M}$ can apply $\mu$ to $\mathtt{M}$.

	\vspace{0.1cm}

	If $\mathcal{M'}$ can \emph{not} apply $\mu$ to $\mathtt{M}$, that means that $\exists \mathcal{A} \in \zeta'(\mu)$ such that $ \mathcal{A} \subseteq \mathtt{M}$. Since for case (1) obviously $\mu \in \mathcal{X}'$, there was some forbidden context introduced in case (2). From that follows there exists a concurrent rule $\mu' \in \mathcal{X}$ enabled at $\mathtt{M}$ with $(\mu, \mu') \in \zeta$. Therefore $\mathcal{M}$ also can \emph{not} apply $\mu$ to $\mathtt{M}$.
\end{itemize}

To show the strictness of the inclusion, we find an $\mathcal{M} \in \mathbb{CR}$ such that $\forall \mathcal{M}' \in \mathbb{CFR}$ holds $\mathfrak{L}(\mathcal{M}) \neq \mathfrak{L}(\mathcal{M}')$. We use the following $\mathcal{M} \in \mathbb{CR}$:

\begin{center}
$\mathcal{M} = 
		\left\{ 
			\begin{array}{l}
	  		\mathtt{M_0 = \emptyset}, \hspace{0.2cm} \zeta = \left\{ 
          \begin{array}{l}
          \mu_\mathtt{1} \to \mathtt{ \{C\} }, \mu_\mathtt{2} \to \mathtt{ \{B\} }
          \end{array} 
          \right\}, \\
	  		\mathcal{X} = \left\{ 
	  		 	\begin{array}{l}
	  		 	\mu_\mathtt{1}: \emptyset \to \mathtt{ \{B\} }, \mu_\mathtt{2}: \emptyset \to \mathtt{ \{C\} }
	  		 	\end{array} 
	  		  \right\} \\
	  		\end{array}
  		\right\}$
\end{center}

\noindent with $\mathcal{S} = \mathtt{\{ B,C \}}$ where using one of the rules disables the other one. Therefore, the set of runs $\mathfrak{L}(\mathcal{M})$ contains only two possible runs $\pi_\mathtt{1}$ and $\pi_\mathtt{2}$:

\vspace*{0.3cm}
{\centering
\begin{tikzpicture}

\node[state] (s0) {};
\node[state, right=1cm of s0] (s1) {$\mathtt{B}$};
\node[state, right=1cm of s1] (s2) {$\mathtt{B,B}$};
\node[state, right=1cm of s2] (s3) {$\mathtt{B,B,B}$};
\node[right=1cm of s3] (D) {$\ldots \pi_\mathtt{1} \in \mathfrak{L}(\mathcal{M})$};

\draw[shorten >=1mm,shorten <=1mm]
	    (s0) edge[above] node[label_node]{$\mu_\mathtt{1}$} (s1)
		(s1) edge[above] node[label_node]{$\mu_\mathtt{1}$} (s2)
		(s2) edge[above] node[label_node]{$\mu_\mathtt{1}$} (s3)
		(s3) edge[above] node[label_node]{$\mu_\mathtt{1}$} (D);
\end{tikzpicture}

\begin{tikzpicture}

\node[state] (s0) {};
\node[state, right=1cm of s0] (s1) {$\mathtt{C}$};
\node[state, right=1cm of s1] (s2) {$\mathtt{C,C}$};
\node[state, right=1cm of s2] (s3) {$\mathtt{C,C,C}$};
\node[right=1cm of s3] (D) {$\ldots \pi_\mathtt{2} \in \mathfrak{L}(\mathcal{M})$};

\draw[shorten >=1mm,shorten <=1mm]
	    (s0) edge[above] node[label_node]{$\mu_\mathtt{2}$} (s1)
		(s1) edge[above] node[label_node]{$\mu_\mathtt{2}$} (s2)
		(s2) edge[above] node[label_node]{$\mu_\mathtt{2}$} (s3)
		(s3) edge[above] node[label_node]{$\mu_\mathtt{2}$} (D);
\end{tikzpicture}
\\}
\vspace*{0.3cm}

\noindent To ensure at least the first step in both runs, it has to hold that both rules $\mu_\mathtt{1}, \mu_\mathtt{2} \in \mathcal{X}'$ for $\mathcal{M}' \in \mathbb{CFR}$. Since these two rules are not concurrent, we cannot restrict their application at all using $\zeta'$. From that follows there will be some additional runs present in $\mathfrak{L}(\mathcal{M}')$ and therefore $\mathfrak{L}(\mathcal{M}) \neq \mathfrak{L}(\mathcal{M}')$. \qed 
\end{proof}

All the other classes are incomparable. We now investigate individual relationships. The proofs are given by showing counterexamples and using already proven relations.

\begin{theorem}\label{thr_or_vs_cfr}
$\mathbb{OR} \incomparable \mathbb{CFR}$
\end{theorem}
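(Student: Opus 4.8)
The plan is to establish both non-inclusions $\mathbb{OR}\not\sqsubseteq\mathbb{CFR}$ and $\mathbb{CFR}\not\sqsubseteq\mathbb{OR}$; by the definition of $\incomparable$ this proves the statement. In each direction I would take the running example of one class (Figure~\ref{or_example} for the first non-inclusion, Figure~\ref{cfr_example} for the second) and show it admits no equivalent system in the other class. The intuition is a mismatch of "views": in a $\mathbb{CFR}$ system the rules that may fire in a state, and hence the possible successor states, are determined by that state alone --- they are the enabled rules not dominated via $\zeta$ by any other enabled rule --- whereas in an $\mathbb{OR}$ system whether an enabled rule may fire depends only on the rule fired in the previous step.

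For $\mathbb{OR}\not\sqsubseteq\mathbb{CFR}$ I would use the $\mathbb{OR}$ system $\overline{\mathcal{M}}$ of Figure~\ref{or_example}, whose runs include $\pi_\mathtt{1}$ of Figure~\ref{or_runs_example} with $\pi_\mathtt{1}[\mathtt{0}]=\emptyset$, $\pi_\mathtt{1}[\mathtt{1}]=\mathtt{\{A\}}$ and $\pi_\mathtt{1}[\mathtt{4}]=\pi_\mathtt{1}[\mathtt{5}]=\emptyset$. Assume some $\mathcal{M}'\in\mathbb{CFR}$ with $\mathfrak{L}(\mathcal{M}')=\mathfrak{L}(\overline{\mathcal{M}})$. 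Then $\mathtt{M'_0}=\emptyset$, and performing the step $\emptyset\to\mathtt{\{A\}}$ forces a rule $\nu\in\mathcal{X}'$ with ${}^{\bullet}\nu=\emptyset$ (since ${}^{\bullet}\nu\subseteq\emptyset$) and $\nu^{\bullet}=\mathtt{\{A\}}$. Because ${}^{\bullet}\nu=\emptyset$, the rule $\nu$ is concurrent with no rule of $\mathcal{X}'$, and since a $\mathbb{CFR}$ regulation may relate only concurrent pairs, $(\nu,\mu')\notin\zeta'$ for every $\mu'$; hence $\nu$ is never excluded and, being enabled in every state, is applicable in every state, in particular in $\emptyset$. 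Therefore $\varepsilon$ is applicable nowhere in $\mathcal{M}'$, so no run of $\mathcal{M}'$ performs the self-loop $\emptyset\to\emptyset$ (realised only by $\varepsilon$); but $\pi_\mathtt{1}\in\mathfrak{L}(\overline{\mathcal{M}})=\mathfrak{L}(\mathcal{M}')$ does perform it (as $\pi_\mathtt{1}[\mathtt{4}]=\pi_\mathtt{1}[\mathtt{5}]=\emptyset$) --- a contradiction.

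For $\mathbb{CFR}\not\sqsubseteq\mathbb{OR}$ I would use the $\mathbb{CFR}$ system $\overline{\mathcal{M}}$ of Figure~\ref{cfr_example}. Writing $s_k$ for the multiset with one $\mathtt{A}$ and $k$ copies of $\mathtt{B}$ (so $s_0=\mathtt{\{A\}}$, $s_1=\mathtt{\{A,B\}}$, $s_2=\mathtt{\{A,B,B\}}$) and $s_\dagger=\emptyset$, one checks that the runs of $\overline{\mathcal{M}}$ are exactly the infinite walks starting in $s_0$ that move each $s_k$ with $k\ge1$ to $s_{k-1}$ or $s_{k+1}$, move $s_0$ to $s_1$ or $s_\dagger$, and stay in $s_\dagger$; in particular $\mathtt{\{B\}}$ is never visited, and $\sigma := s_0, s_1, s_0, s_\dagger, s_\dagger, \ldots$ is a run. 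Assume $\mathcal{M}'\in\mathbb{OR}$ with $\mathfrak{L}(\mathcal{M}')=\mathfrak{L}(\overline{\mathcal{M}})$; then $\mathtt{M'_0}=s_0$. Let $i,d,\alpha\in\mathcal{X}'$ be the rules $\mathcal{M}'$ uses in the first three steps of $\sigma$, realising $s_0\to s_1$, $s_1\to s_0$, $s_0\to s_\dagger$ respectively. Solving the corresponding multiset equations gives $\alpha=(\mathtt{\{A\}}\to\emptyset)$, $i\in\{(\emptyset\to\mathtt{\{B\}}),\,(\mathtt{\{A\}}\to\mathtt{\{A,B\}})\}$ and $d\in\{(\mathtt{\{B\}}\to\emptyset),\,(\mathtt{\{A,B\}}\to\mathtt{\{A\}})\}$, while $\mathbb{OR}$-validity of $\sigma$ gives $(\alpha,d)\notin\zeta'$ and $(d,i)\notin\zeta'$. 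For each of the four combinations one verifies directly that $i$ is enabled in $s_1$ and applying it there yields $s_2$, that $d$ is enabled in $s_2$ and applying it there yields $s_1$, and that $\alpha$ is enabled in $s_1$ and applying it there yields $\mathtt{\{B\}}$. Hence the sequence $s_0, s_1, s_2, s_1, \mathtt{\{B\}}$ obtained by firing $i,i,d,\alpha$ in turn is a legal prefix of a run of $\mathcal{M}'$: step $1$ is unconstrained; step $2$ fires $i$ after $i$, admissible because $\zeta'$ is irreflexive; step $3$ fires $d$ after $i$, admissible because $(d,i)\notin\zeta'$; step $4$ fires $\alpha$ after $d$, admissible because $(\alpha,d)\notin\zeta'$. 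Completing this prefix to an infinite run produces a run of $\mathcal{M}'$ that visits $\mathtt{\{B\}}$, which no run of $\overline{\mathcal{M}}$ does --- a contradiction. Combining both parts, $\mathbb{OR}\incomparable\mathbb{CFR}$.

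The main obstacle is the second direction, where a hypothetical $\mathbb{OR}$ equivalent of the $\mathbb{CFR}$ example may choose its own rule set, so one cannot simply reuse $\mu_\mathtt{1},\mu_\mathtt{2},\mu_\mathtt{3}$. One must argue that, whatever the rules are, (i) $\alpha=(\mathtt{\{A\}}\to\emptyset)$ is forced, (ii) $\alpha$ is enabled in every reachable state $s_k$ and produces the forbidden state $\mathtt{\{B\}}$ from $s_1$, and (iii) no strict partial order can simultaneously keep $\alpha$ admissible immediately after $d$ (required by $\sigma$) and block $\alpha$ after the rule sequence $i,i,d$ that drives the system into $s_1$ --- and this is exactly what the small case analysis on the shapes of $i$ and $d$ delivers. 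The remaining ingredients (the multiset bookkeeping, and the state-determined-successors property of $\mathbb{CFR}$ used in the first direction) are routine.
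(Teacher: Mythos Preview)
Your proof is correct, but both directions differ from the paper's approach in interesting ways.

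For $\mathbb{OR}\not\sqsubseteq\mathbb{CFR}$ you and the paper both start from the system in Figure~\ref{or_example}, but you argue via the impossibility of the self-loop $\emptyset\to\emptyset$: any $\mathcal{M}'\in\mathbb{CFR}$ must contain a rule $\nu$ with ${}^{\bullet}\nu=\emptyset$, hence $\nu$ is concurrent with nothing, is never dominated, and is enabled everywhere, so $\varepsilon$ is never applicable and $\pi_\mathtt{1}$ cannot be reproduced. The paper instead observes that the two forced rules $\mu_\mathtt{1},\mu_\mathtt{2}$ are not concurrent with one another and therefore cannot be restricted relative to each other by any $\mathbb{CFR}$ regulation, so the forbidden pattern $\mu_\mathtt{2}.\mu_\mathtt{1}$ survives in $\mathcal{M}'$. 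Your argument is arguably tighter, since it relies only on the single rule with empty left-hand side and does not need to reason about what other rules $\mathcal{X}'$ might contain.

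For $\mathbb{CFR}\not\sqsubseteq\mathbb{OR}$ the approaches diverge more substantially. You reuse the running example of Figure~\ref{cfr_example} and carry out a case analysis on the possible shapes of the three forced rules $i,d,\alpha$, then exploit irreflexivity of $\zeta'$ together with the constraints $(d,i)\notin\zeta'$ and $(\alpha,d)\notin\zeta'$ extracted from $\sigma$ to build a valid $\mathbb{OR}$ prefix reaching the forbidden state $\mathtt{\{B\}}$. The paper instead constructs a \emph{new} $\mathbb{CFR}$ system, chosen so that $\mathfrak{L}(\overline{\mathcal{M}})$ consists of a single alternating run; this makes the $\mathbb{OR}$ impossibility a short three-case check on the ordering of two forced rules, exactly as in the proof of Theorem~\ref{thr_or_vs_pr}. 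The paper's route is shorter and avoids your four-way case split, at the cost of introducing an ad hoc example; your route has the merit of showing that even the canonical $\mathbb{CFR}$ example already separates the classes. One small point worth making explicit in your write-up is that the prefix ending in $\mathtt{\{B\}}$ can always be completed to an infinite run (via $\varepsilon$ once no further rule is applicable), so that it genuinely witnesses an element of $\mathfrak{L}(\mathcal{M}')\setminus\mathfrak{L}(\overline{\mathcal{M}})$.
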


\begin{proof}
To show that $\mathbb{OR} \not\sqsubseteq \mathbb{CFR}$, we find an $\mathcal{M} \in \mathbb{OR}$ such that $\forall \mathcal{M}' \in \mathbb{CFR}$ holds that $\mathfrak{L}(\mathcal{M}) \neq \mathfrak{L}(\mathcal{M}')$. We use $\mathcal{M} \in \mathbb{OR}$ from Figure~\ref{or_example}. Based on proof of Figure~\ref{thr_mrs_vs_c}, both rules $\mu_\mathtt{1},\mu_\mathtt{2}$ belong to $\mathcal{X}'$. However, these rules are not concurrent, so they can be used in an arbitrary fashion in any $\mathcal{M}' \in \mathbb{CFR}$. For example, a run:

{\centering
\begin{tikzpicture}

\node[state] (s0) {};
\node[state, right=1cm of s0] (s1) {$\mathtt{A}$};
\node[state, right=1cm of s1] (s2) {};
\node[state, right=1cm of s2] (s3) {$\mathtt{A}$};
\node[state, right=1cm of s3] (s4) {$\mathtt{A,A}$};
\node[right=1cm of s4] (D) {$\ldots$};

\draw[shorten >=1mm,shorten <=1mm]
	    (s0) edge[above] node[label_node]{$\mu_\mathtt{1}$} (s1)
		(s1) edge[above] node[label_node]{$\mu_\mathtt{2}$} (s2)
		(s2) edge[above] node[label_node]{$\mu_\mathtt{1}$} (s3)
		(s3) edge[above] node[label_node]{$\mu_\mathtt{1}$} (s4)
		(s4) edge[above] node[label_node]{} (D);
\end{tikzpicture}
\\}

\noindent is possible in $\mathfrak{L}(\mathcal{M}')$ but not in $\mathfrak{L}(\mathcal{M})$, from that $\mathfrak{L}(\mathcal{M}) \neq \mathfrak{L}(\mathcal{M}')$.

To show that $\mathbb{CFR} \not\sqsubseteq \mathbb{OR}$, we find an $\mathcal{M} \in \mathbb{CFR}$ such that $\forall \mathcal{M}' \in \mathbb{OR}$ holds that $\mathfrak{L}(\mathcal{M}) \neq \mathfrak{L}(\mathcal{M}')$. We use the following $\mathcal{M} \in \mathbb{CFR}$:

\begin{center}
$\mathcal{M} = 
		\left\{ 
			\begin{array}{l}
	  		\mathtt{M_0 = \{A,B\}}, \hspace{0.2cm} \zeta = \left\{ 
          \begin{array}{l}
          (\mu_\mathtt{2}, \mu_\mathtt{1})
          \end{array} 
          \right\}, \\
	  		\mathcal{X} = \left\{ 
	  		 	\begin{array}{l}
	  		 	\mu_\mathtt{1}: \mathtt{ \{A,B\} } \to \mathtt{ \{A,C\} }, \mu_\mathtt{2}: \mathtt{ \{A\} } \to \mathtt{ \{A,B\} }
	  		 	\end{array} 
	  		  \right\} \\
	  		\end{array}
  		\right\}$
\end{center}

\noindent with $\mathcal{S} = \mathtt{\{ A,B,C \}}$. In this system, rules are alternating due to the defined priority on rule $\mu_\mathtt{1}$. The set of runs $\mathfrak{L}(\mathcal{M})$ contains only one possible run $\pi$:

{\centering
\begin{tikzpicture}

\node[state] (s0) {$\mathtt{A,B}$};
\node[state, right=1cm of s0] (s1) {$\mathtt{A,C}$};
\node[state, right=1cm of s1] (s2) {$\mathtt{A,B,C}$};
\node[state, right=1cm of s2] (s3) {$\mathtt{A,C,C}$};
\node[state, right=1cm of s3] (s4) {$\mathtt{A,B,C,C}$};
\node[right=1cm of s4] (D) {$\ldots$};

\draw[shorten >=1mm,shorten <=1mm]
	    (s0) edge[above] node[label_node]{$\mu_\mathtt{1}$} (s1)
		(s1) edge[above] node[label_node]{$\mu_\mathtt{2}$} (s2)
		(s2) edge[above] node[label_node]{$\mu_\mathtt{1}$} (s3)
		(s3) edge[above] node[label_node]{$\mu_\mathtt{2}$} (s4)
		(s4) edge[above] node[label_node]{$\mu_\mathtt{1}$} (D);
\end{tikzpicture}
\\}

\noindent which implies that at least two rules $\mu_\mathtt{I}, \mu_\mathtt{II} \in \mathcal{X}'$ such that $\mathtt{B} \in~ ^\bullet\mu_\mathtt{I} \wedge \mathtt{C} \in \mu^\bullet_\mathtt{I}$ (from the first step) and $\mathtt{B} \in \mu^\bullet_\mathtt{II}$ (from the second step). Then, similarly to proof of Figure~\ref{thr_or_vs_pr}, there are three options (incomparable and ordered in either direction) and for each option, there exists a run possible in $\mathfrak{L}(\mathcal{M}')$ but not possible in $\mathfrak{L}(\mathcal{M})$, i.e. $\mathfrak{L}(\mathcal{M}) \neq \mathfrak{L}(\mathcal{M}')$.

Another option would be to create a unique rule $\mu_\mathtt{i}: \pi[\mathtt{i-1}] \to \pi[\mathtt{i}]$ for each step $\mathtt{i}$ and placing this rule above each rule $\mu_\mathtt{j}$ with $\mathtt{j<i}$ in the order $\zeta'$. However, this would require that $\mathcal{X}'$ is infinite which the definition does not allow. \qed 
\end{proof}

\begin{theorem}\label{thr_cfr_vs_pr}
$\mathbb{CFR} \incomparable \mathbb{PR}$
\end{theorem}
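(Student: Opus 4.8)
The plan is to prove the two non-inclusions $\mathbb{PR}\not\sqsubseteq\mathbb{CFR}$ and $\mathbb{CFR}\not\sqsubseteq\mathbb{PR}$ by separate counterexamples, exactly in the style of Theorem~\ref{thr_or_vs_cfr}.

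For $\mathbb{PR}\not\sqsubseteq\mathbb{CFR}$ I would reuse the system $\mathcal{M}\in\mathbb{PR}$ of Figure~\ref{pr_example} (which forces strict alternation of $\mu_\mathtt{1}:\emptyset\to\mathtt{\{A\}}$ and $\mu_\mathtt{2}:\emptyset\to\mathtt{\{B\}}$). Suppose $\mathcal{M}'\in\mathbb{CFR}$ had $\mathfrak{L}(\mathcal{M}')=\mathfrak{L}(\mathcal{M})$. As in the proof of Theorem~\ref{thr_mrs_vs_c}, the first steps of the two valid runs force $\mathcal{X}'$ to contain rules realising $\emptyset\to\mathtt{\{A\}}$ and $\emptyset\to\mathtt{\{B\}}$ (such a transition leaves no freedom: a rule inducing $\emptyset\to\mathtt{\{A\}}$ must have empty left-hand side and right-hand side $\mathtt{\{A\}}$). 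Having empty left-hand sides, these two rules are not concurrent with any rule, hence no pair involving them can belong to $\zeta'$; consequently neither is ever blocked, and either may be applied at every step. So the run applying $\emptyset\to\mathtt{\{A\}}$ forever lies in $\mathfrak{L}(\mathcal{M}')$ but not in $\mathfrak{L}(\mathcal{M})$, a contradiction. (Equivalently: $\mathbb{OR}\sqsubseteq\mathbb{PR}$ by Theorem~\ref{thr_or_vs_pr}, and the $\mathbb{OR}$ witness of Theorem~\ref{thr_or_vs_cfr} already lies in $g(\mathbb{OR})\setminus g(\mathbb{CFR})\subseteq g(\mathbb{PR})\setminus g(\mathbb{CFR})$.)

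For $\mathbb{CFR}\not\sqsubseteq\mathbb{PR}$ — the genuinely new part — I would exhibit a concurrent-free system whose blocking decision depends on the current \emph{state} in a way a one-step successor function cannot mimic. Take a rule $\nu$ together with a rule $\omega$ having ${}^{\bullet}\omega={}^{\bullet}\nu\cup\mathtt{\{D\}}$ and $(\nu,\omega)\in\zeta$, so that $\nu$ is applicable exactly at states where $\mathtt{D}$ is absent; add a rule $q$ with ${}^{\bullet}q\cap q^{\bullet}=\emptyset$ (so that $q$ is the unique rule inducing any given transition it realises) that leads from a $\mathtt{D}$-free state to a state $\mathtt{M}$ on which $\nu$ is enabled, and from the corresponding $\mathtt{D}$-carrying state to $\mathtt{M}\cup\mathtt{\{D\}}$. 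Choosing $\mathtt{M_0}$ and a few auxiliary rules (in particular one rule that simply discards $\mathtt{D}$) so that $\mathfrak{L}(\mathcal{M})$ consists of exactly two runs — one that first discards $\mathtt{D}$, then applies $q$ and then $\nu$, and one that keeps $\mathtt{D}$, applies $q$, and is then forced onto $\omega$ forever — pins everything down. Now assume $\mathcal{M}'\in\mathbb{PR}$ with $\mathfrak{L}(\mathcal{M}')=\mathfrak{L}(\mathcal{M})$. The forced transitions identify $q$ as a rule of $\mathcal{X}'$; the first run forces $\nu'\in\zeta'(q)$ for the rule $\nu'\in\mathcal{X}'$ inducing $\mathtt{M}\to\nu(\mathtt{M})$; but $\nu'$ is also enabled at $\mathtt{M}\cup\mathtt{\{D\}}$, so the second run could be continued after $q$ by $\nu'$, yielding a run reaching $\nu(\mathtt{M})\cup\mathtt{\{D\}}$. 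That run is not in $\mathfrak{L}(\mathcal{M})$, because in $\mathcal{M}$ the rule $\nu$ is blocked at $\mathtt{M}\cup\mathtt{\{D\}}$ and no other rule induces that transition — contradiction.

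The main obstacle is exactly the freedom of $\mathcal{M}'$: nothing a priori forces $\mathcal{M}'$ to use the canonical $q$ for the $\mathtt{D}$-carrying step; it could carry a catalysed variant $q_{\mathtt{D}}$ with ${}^{\bullet}q_{\mathtt{D}}={}^{\bullet}q\cup\mathtt{\{D\}}$ and keep $\nu'$ out of $\zeta'(q_{\mathtt{D}})$, escaping the argument. To close this gap I would \emph{pad} the initial multiset so that $q_{\mathtt{D}}$ — and likewise every other dangerous catalysed variant of a rule occurring along the two runs — is enabled already at $\mathtt{M_0}$ and, applied there, steps to a multiset that is not a one-step successor of $\mathtt{M_0}$ in $\mathcal{M}$; since the first step is unrestricted in $\mathbb{PR}$, such a variant would immediately generate a run outside $\mathfrak{L}(\mathcal{M})$, so it cannot belong to $\mathcal{X}'$ at all. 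Finiteness of $\mathcal{X}'$ rules out the complementary escape of spawning a fresh rule per step, as noted in Theorem~\ref{thr_or_vs_cfr}. Checking that one concrete choice of $\mathtt{M_0}$ and rule set simultaneously realises the intended two-run semantics and annihilates every such variant is the routine but fiddly core of the argument.
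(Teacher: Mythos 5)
Your first half is fine. The direct argument ($\emptyset\to\mathtt{\{A\}}$ has an empty left-hand side, hence is concurrent with nothing, hence can never appear as the first component of a pair in a $\mathbb{CFR}$ regulation and can never be blocked) is correct, and your parenthetical alternative is exactly the paper's route: $\mathbb{PR}\not\sqsubseteq\mathbb{CFR}$ is obtained by combining Theorem~\ref{thr_or_vs_pr} with Theorem~\ref{thr_or_vs_cfr}.

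The second half has a genuine gap, and I do not believe your proposed repair closes it. You never write down the witness system, and you explicitly defer the core verification; worse, the ``padding'' strategy is structurally flawed. A catalysed variant $q_{\mathtt{D}}$ with ${}^{\bullet}q_{\mathtt{D}}={}^{\bullet}q\cup\mathtt{\{D\}}$ and $q_{\mathtt{D}}^{\bullet}=q^{\bullet}\cup\mathtt{\{D\}}$ induces, at every state where it is enabled, \emph{exactly the same transition} as $q$; so any padding of $\mathtt{M_0}$ that makes $q_{\mathtt{D}}$ step outside the language also makes $q$ step outside it (both are enabled at such an $\mathtt{M_0}$ and land on the same multiset), and you end up excluding the very rule you need to realise the run. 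You therefore cannot eliminate $q_{\mathtt{D}}$ by a reachability argument at all: the adversary simply keeps both $q$ and $q_{\mathtt{D}}$ in $\mathcal{X}'$ with \emph{different successor sets} ($\nu'\in\zeta'(q)$, $\nu'\notin\zeta'(q_{\mathtt{D}})$), using the choice of rule to record one bit of state (``was $\mathtt{D}$ present?'') and propagate it through $\zeta'$. Since the successor function can encode any finite amount of such history, a counterexample based on distinguishing boundedly many contexts is unlikely to succeed. The paper instead defeats the finite memory of $\zeta'$ by an unbounded counting argument: it takes the $\mathbb{CFR}$ system of Figure~\ref{cfr_example}, in which $\mu_\mathtt{3}$ may fire only after \emph{all} $\mathtt{B}$s have been consumed, views any candidate $\mathcal{M}'\in\mathbb{PR}$ as a finite successor graph, notes that a run producing sufficiently many $\mathtt{B}$s must traverse a cycle of that graph, and pumps that cycle once more to obtain a run that is still legal in $\mathcal{M}'$ but ends in $\mu_\mathtt{3}$ with unconsumed $\mathtt{B}$s, hence is not in $\mathfrak{L}(\mathcal{M})$. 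You should either adopt such a pumping argument or give a complete, concrete construction together with a proof that \emph{every} rule of $\mathcal{X}'$ realising the critical transition (not just the canonical one and its catalysed variants) leads to a forbidden run.
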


\begin{proof}
The property $\mathbb{PR} \not\sqsubseteq \mathbb{CFR}$ follows directly from $\mathbb{OR} \incomparable \mathbb{CFR}$ (Figure~\ref{thr_or_vs_cfr}) and $\mathbb{OR} \sqsubset \mathbb{PR}$ (Figure~\ref{thr_or_vs_pr}).

To show that $\mathbb{CFR} \not\sqsubseteq \mathbb{PR}$, we find an $\mathcal{M} \in \mathbb{CFR}$ such that $\forall \mathcal{M}' \in \mathbb{PR}$ holds that $\mathfrak{L}(\mathcal{M}) \neq \mathfrak{L}(\mathcal{M}')$. We use $\mathcal{M} \in \mathbb{CFR}$ from Figure~\ref{cfr_example}.

First, we make a general observation about any $\mathcal{M}' \in \mathbb{PR}$: due to the presence of successor function, the rules of a $\mathbb{PR}$ system can be seen as a directed \emph{successor graph}, where vertices are the rules and edges are the possible successors. Such a graph of the $\mathcal{M}'$ system must contain at least one cycle, which (in summation) increases the number of elements \texttt{B} in the systems state. Let $c$ be the maximal number of \texttt{B}s produced by any such cycle in this graph.

Next, let $\pi_\mathtt{I} \in \mathfrak{L}(\mathcal{M})$ be a run which produces $2c$ of \texttt{B}s, then immediately discards them, and removes the element \texttt{A} with rule $\mu_\texttt{3}$. Clearly, $\mathcal{M}'$ must contain rule $\mu_\mathtt{3}$ (e.g. fifth step of run $\pi_\mathtt{1}$ in Figure~\ref{cfr_runs_example}) and this rule is applied at the end of this run (neglecting the empty rule $\varepsilon$). Additionally, since the number of \texttt{B}s produced is $2c$, the run must contain at least one aforementioned cycle from the successor graph.

Finally, let us consider another run $\pi_\mathtt{II}$ which extends the run $\pi_\mathtt{I}$ by adding one repetition of this cycle. The run $\pi_\mathtt{II}$ is still valid in $\mathcal{M}'$, by definition of the successor graph, and the last transition of this run is still $\mu_\mathtt{3}$. However, the number of produced \texttt{B}s is higher than the number of \texttt{B}s consumed. Such a run is not possible in $\mathcal{M}$, because $\mu_\mathtt{3}$ can be fired only when all \texttt{B}s have been consumed. Therefore $\pi_\mathtt{II} \not\in \mathfrak{L}(\mathcal{M})$ and from that $\mathfrak{L}(\mathcal{M}) \neq \mathfrak{L}(\mathcal{M}')$. \qed 
\end{proof}

\begin{theorem}\label{thr_or_vs_cr}
$\mathbb{OR} \incomparable \mathbb{CR}$
\end{theorem}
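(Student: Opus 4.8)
The plan is to prove incomparability in two directions, each via a counterexample system together with an argument that no system in the other class can reproduce its run set. Both directions should follow the now-familiar pattern: fix a system in one class, observe which rules any candidate imitating system must contain (forced by the first few steps of some valid run), and then show that whatever regulation the other class allows is too weak (or too strong) to match the run set exactly.

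For $\mathbb{OR} \not\sqsubseteq \mathbb{CR}$, I would reuse the $\mathbb{OR}$ system of Figure~\ref{or_example}, whose rules are $\mu_1:\emptyset\to\{\mathtt{A}\}$ and $\mu_2:\{\mathtt{A}\}\to\emptyset$ with order $\mu_1<\mu_2$, so that $\mu_1$ may never follow $\mu_2$. As in Theorem~\ref{thr_mrs_vs_c}, any $\mathcal{M}'\in\mathbb{CR}$ matching its runs must contain exactly these two rules (forced by run $\pi_1$ of Figure~\ref{or_runs_example}). The key point is that $\mathbb{CR}$ regulation depends \emph{only on the current state}, not on the previously applied rule. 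Consider two states both equal to $\{\mathtt{A}\}$ reached by different histories: one reached by $\mu_1$ from $\emptyset$, the other reached by $\mu_1\mu_1\mu_2$ (so the last rule was $\mu_2$). In the $\mathbb{OR}$ system $\mu_1$ is enabled after the first but forbidden after the second; in \emph{any} $\mathbb{CR}$ system the applicability of $\mu_1$ at $\{\mathtt{A}\}$ is the same in both cases (it depends only on whether some prohibited context of $\mu_1$ is a subset of $\{\mathtt{A}\}$). Hence no $\mathbb{CR}$ system can both allow the run $\mu_1\mu_1\mu_2\mu_2\varepsilon^\omega$ (valid in $\mathbb{OR}$) and forbid a continuation $\mu_1\mu_1\mu_2\mu_1\ldots$ (invalid in $\mathbb{OR}$) — if the first is in $\mathfrak{L}(\mathcal{M}')$, then $\mu_1$ is applicable at $\{\mathtt{A}\}$, so the second extension is also in $\mathfrak{L}(\mathcal{M}')$. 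Therefore $\mathfrak{L}(\mathcal{M})\neq\mathfrak{L}(\mathcal{M}')$.

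For $\mathbb{CR} \not\sqsubseteq \mathbb{OR}$, I would exhibit a small $\mathbb{CR}$ system whose context-sensitive restriction cannot be captured by any partial order on a finite rule set. A natural candidate has elements $\mathcal{S}=\{\mathtt{A},\mathtt{B}\}$, initial state $\emptyset$, rules $\mu_1:\emptyset\to\{\mathtt{A}\}$, $\mu_2:\emptyset\to\{\mathtt{B}\}$, $\mu_3:\{\mathtt{A},\mathtt{B}\}\to\emptyset$ say, with $\zeta$ prohibiting $\mu_1$ in context $\{\mathtt{A}\}$ (so at most one $\mathtt{A}$ ever exists) and leaving $\mu_2$ unrestricted; the point being that whether $\mu_1$ can fire depends on the current multiset, not on the immediately preceding rule. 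Any $\mathbb{OR}$ system reproducing this must contain rules forced by the first couple of steps of valid runs, and then — exactly as in the proofs of Theorem~\ref{thr_or_vs_pr} and Theorem~\ref{thr_or_vs_cfr} — a case analysis over the finitely many ways these rules can be ordered (incomparable, or ordered in either direction) produces, in each case, a run that is valid under the order constraint but violates the context constraint (e.g. producing two $\mathtt{A}$s). One must also note, as those earlier proofs do, that the escape hatch of inventing a fresh rule per step to encode the history is blocked because $\mathcal{X}'$ must be finite.

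The main obstacle will be choosing the $\mathbb{CR}$ witness for the second direction so that the ordering case-analysis genuinely closes: I need a system where the context restriction forces behaviour that \emph{no} finite partial order can mimic, and where simultaneously the forced-rules argument pins down enough of $\mathcal{X}'$ to make the case-split finite and exhaustive. The cleanest route is probably to reuse the structure from the $\mathbb{CFR}\not\sqsubseteq\mathbb{CR}$ strictness example (Theorem~\ref{thr_cfr_vs_cr}) or the $\mathbb{CR}$ example of Figure~\ref{cr_example}, adjusted so that two forced rules with the same left-hand side (hence indistinguishable to a pure "immediate predecessor" order) must behave differently depending on state; if that matching of left-hand sides works, the three-case ordering argument goes through verbatim as in Theorem~\ref{thr_or_vs_pr}. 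I would also double-check that the $\mathbb{OR}$ witness's valid run $\pi_1$ really does force both $\mu_1$ and $\mu_2$ into $\mathcal{X}'$ with the right left- and right-hand sides, since the whole first direction rests on the state-versus-history distinction being exploitable there.
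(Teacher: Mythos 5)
Your proposal is correct in substance, but it departs from the paper's proof in both halves, and in an interesting way. For $\mathbb{OR} \not\sqsubseteq \mathbb{CR}$ the paper uses the same witness (Figure~\ref{or_example}) but argues by enumerating the possible prohibited contexts $\zeta'(\mu_\mathtt{1}), \zeta'(\mu_\mathtt{2}) \in \{\emptyset, \mathtt{\{A\}}\}$ and exhibiting a bad run in each of the four cases; your argument instead isolates the conceptual reason — conditional applicability depends only on the current multiset, whereas the ordered constraint depends on the previously applied rule — and uses the two visits to the state $\mathtt{\{A\}}$ in run $\pi_\mathtt{1}$ of Figure~\ref{or_runs_example} to force the forbidden continuation. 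This is cleaner and more robust than the paper's case split; to make it airtight you should weaken ``$\mathcal{M}'$ must contain exactly these two rules'' to ``some rule of $\mathcal{M}'$ realizes each forced transition'' (a rule such as $\mathtt{\{A\}}\to\mathtt{\{A,A\}}$ could stand in for $\mu_\mathtt{1}$ at the second step), and note that the deviating prefix extends to an infinite run of $\mathcal{M}'$ by forever alternating the two already-validated (state, rule) pairs; the history-independence argument is untouched by either point. For $\mathbb{CR} \not\sqsubseteq \mathbb{OR}$ the paper gives no new construction at all: it derives this direction from $\mathbb{OR} \incomparable \mathbb{CFR}$ (Theorem~\ref{thr_or_vs_cfr}) together with $\mathbb{CFR} \sqsubset \mathbb{CR}$ (Theorem~\ref{thr_cfr_vs_cr}), whereas you build a direct $\mathbb{CR}$ witness. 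Your witness does work, and in fact more easily than your hedged three-case plan suggests: since any first step $\emptyset \to \mathtt{\{A\}}$ forces $\mu_\mathtt{1} \in \mathcal{X}'$ and a strict partial order is irreflexive, no ordered system can forbid $\mu_\mathtt{1}$ immediately after $\mu_\mathtt{1}$, so the run beginning $\mu_\mathtt{1}\mu_\mathtt{1}$ reaches a state with two $\mathtt{A}$s, which no run of your conditional system ever contains — no case analysis over orderings of distinct rules is needed (indeed $\mu_\mathtt{2},\mu_\mathtt{3}$ and the final paragraph's worries are superfluous). The trade-off: the paper's route is shorter given its earlier theorems, while yours is self-contained and exposes the state-versus-history distinction that actually separates the two regulations.
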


\begin{proof}
The property $\mathbb{CR} \not\sqsubseteq \mathbb{OR}$ follows directly from $\mathbb{OR} \incomparable \mathbb{CFR}$ (Figure~\ref{thr_or_vs_cfr}) and $\mathbb{CFR} \sqsubset \mathbb{CR}$ (Figure~\ref{thr_cfr_vs_cr}).

To show that $\mathbb{OR} \not\sqsubseteq \mathbb{CR}$, we find an $\mathcal{M} \in \mathbb{OR}$ such that $\forall \mathcal{M}' \in \mathbb{CR}$ holds that $\mathfrak{L}(\mathcal{M}) \neq \mathfrak{L}(\mathcal{M}')$. We use $\mathcal{M} \in \mathbb{OR}$ from Figure~\ref{or_example}. As it was stated in the proof of Figure~\ref{thr_mrs_vs_c}, both rules $\mu_\mathtt{1}, \mu_\mathtt{2} \in \mathcal{X}'$ of the $\mathcal{M}' \in \mathbb{CR}$. Due to the minimalistic nature of this system, it can be shown for any case of $\zeta'$ definition (assuming adding prohibited context with increased number of $\mathtt{A}$'s has no effect in this case), that $\mathfrak{L}(\mathcal{M}')$ contains more possible runs:

\vspace*{0.3cm}

\hspace*{1cm}\begin{tabularx}{1\textwidth}{m{5cm} m{3cm}}
$\zeta'(\mu_\mathtt{1}) = \emptyset \wedge \zeta'(\mu_\mathtt{2}) = \emptyset$& 
\begin{tikzpicture}
\node[state] (s0) {};
\node[state, right=1cm of s0] (s1) {$\mathtt{A}$};
\node[state, right=1cm of s1] (s2) {};
\node[state, right=1cm of s2] (s3) {$\mathtt{A}$};
\node[right=1cm of s3] (D) {$\ldots \not\in \mathfrak{L}(\mathcal{M})$};
\draw[shorten >=1mm,shorten <=1mm]
	    (s0) edge[above] node[label_node]{$\mu_\mathtt{1}$} (s1)
		(s1) edge[above] node[label_node]{$\mu_\mathtt{2}$} (s2)
		(s2) edge[above] node[label_node]{$\mu_\mathtt{1}$} (s3)
		(s3) edge[above] node[label_node]{$\mu_\mathtt{2}$} (D);
\end{tikzpicture} \\
$\zeta'(\mu_\mathtt{1}) = \mathtt{\{A\}} \wedge \zeta'(\mu_\mathtt{2}) = \emptyset$ & 
\begin{tikzpicture}
\node[state] (s0) {};
\node[state, right=1cm of s0] (s1) {$\mathtt{A}$};
\node[state, right=1cm of s1] (s2) {};
\node[state, right=1cm of s2] (s3) {$\mathtt{A}$};
\node[right=1cm of s3] (D) {$\ldots \not\in \mathfrak{L}(\mathcal{M})$};
\draw[shorten >=1mm,shorten <=1mm]
	    (s0) edge[above] node[label_node]{$\mu_\mathtt{1}$} (s1)
		(s1) edge[above] node[label_node]{$\mu_\mathtt{2}$} (s2)
		(s2) edge[above] node[label_node]{$\mu_\mathtt{1}$} (s3)
		(s3) edge[above] node[label_node]{$\mu_\mathtt{2}$} (D);
\end{tikzpicture} \\
$\zeta'(\mu_\mathtt{1}) = \emptyset \wedge \zeta'(\mu_\mathtt{2}) = \mathtt{\{A\}}$ & 
\begin{tikzpicture}
\node[state] (s0) {};
\node[state, right=1cm of s0] (s1) {$\mathtt{A}$};
\node[state, right=1cm of s1] (s2) {$\mathtt{A,A}$};
\node[state, right=1cm of s2] (s3) {$\mathtt{A,A,A}$};
\node[right=1cm of s3] (D) {$\ldots \not\in \mathfrak{L}(\mathcal{M})$};
\draw[shorten >=1mm,shorten <=1mm]
	    (s0) edge[above] node[label_node]{$\mu_\mathtt{1}$} (s1)
		(s1) edge[above] node[label_node]{$\mu_\mathtt{1}$} (s2)
		(s2) edge[above] node[label_node]{$\mu_\mathtt{1}$} (s3)
		(s3) edge[above] node[label_node]{$\mu_\mathtt{1}$} (D);
\end{tikzpicture} \\
$\zeta'(\mu_\mathtt{1}) = \mathtt{\{A\}} \wedge \zeta'(\mu_\mathtt{2}) = \mathtt{\{A\}}$ & 
\begin{tikzpicture}
\node[state] (s0) {};
\node[state, right=1cm of s0] (s1) {$\mathtt{A}$};
\node[state, right=1cm of s1] (s2) {$\mathtt{A}$};
\node[state, right=1cm of s2] (s3) {$\mathtt{A}$};
\node[right=1cm of s3] (D) {$\ldots \not\in \mathfrak{L}(\mathcal{M})$};
\draw[shorten >=1mm,shorten <=1mm]
	    (s0) edge[above] node[label_node]{$\mu_\mathtt{1}$} (s1)
		(s1) edge[above] node[label_node]{$\varepsilon$} (s2)
		(s2) edge[above] node[label_node]{$\varepsilon$} (s3)
		(s3) edge[above] node[label_node]{$\varepsilon$} (D);
\end{tikzpicture} \\
\end{tabularx}

\vspace*{0.3cm}

\noindent from that follows $\mathfrak{L}(\mathcal{M}) \neq \mathfrak{L}(\mathcal{M}')$. \qed 
\end{proof}

\begin{theorem}\label{thr_pr_vs_cr}
$\mathbb{PR} \incomparable \mathbb{CR}$
\end{theorem}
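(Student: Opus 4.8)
The plan is to obtain both non-inclusions for free from relations already established, rather than constructing a new witness system. The key observation is that $\sqsubseteq$ is transitive, since it is defined as inclusion of the sets $g(\cdot)$.

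First I would show $\mathbb{PR} \not\sqsubseteq \mathbb{CR}$. By Theorem~\ref{thr_or_vs_cr} we have $\mathbb{OR} \incomparable \mathbb{CR}$, and in particular $\mathbb{OR} \not\sqsubseteq \mathbb{CR}$; by Theorem~\ref{thr_or_vs_pr} we have $\mathbb{OR} \sqsubseteq \mathbb{PR}$. If $\mathbb{PR} \sqsubseteq \mathbb{CR}$ held, transitivity would yield $\mathbb{OR} \sqsubseteq \mathbb{CR}$, contradicting $\mathbb{OR} \not\sqsubseteq \mathbb{CR}$. Concretely this means: the witness $\mathcal{M} \in \mathbb{OR}$ used in the proof of Theorem~\ref{thr_or_vs_cr} (the system of Figure~\ref{or_example}) has an equivalent system in $\mathbb{PR}$ by the construction in the proof of Theorem~\ref{thr_or_vs_pr}, yet no system in $\mathbb{CR}$ has the same set of runs.

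Symmetrically, for $\mathbb{CR} \not\sqsubseteq \mathbb{PR}$ I would invoke Theorem~\ref{thr_cfr_vs_pr}, which gives $\mathbb{CFR} \not\sqsubseteq \mathbb{PR}$, together with Theorem~\ref{thr_cfr_vs_cr}, which gives $\mathbb{CFR} \sqsubseteq \mathbb{CR}$. If $\mathbb{CR} \sqsubseteq \mathbb{PR}$ held, transitivity would yield $\mathbb{CFR} \sqsubseteq \mathbb{PR}$, again a contradiction. The underlying witness is the $\mathbb{CFR}$ system of Figure~\ref{cfr_example}: it has an equivalent system in $\mathbb{CR}$ by the construction in the proof of Theorem~\ref{thr_cfr_vs_cr}, but no $\mathbb{PR}$ system has the same runs. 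Combining the two directions gives $\mathbb{PR} \not\sqsubseteq \mathbb{CR} \wedge \mathbb{CR} \not\sqsubseteq \mathbb{PR}$, i.e.\ $\mathbb{PR} \incomparable \mathbb{CR}$.

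There is no genuine obstacle here; the argument is essentially bookkeeping over earlier results, and the only points requiring care are (i) using the correct direction of each incomparability statement (the $\not\sqsubseteq$ component must point the right way) and (ii) if one prefers a self-contained presentation over the short transitivity chase, noting explicitly that the concrete witness systems survive the translation along the inclusion constructions of Theorems~\ref{thr_or_vs_pr} and~\ref{thr_cfr_vs_cr}.
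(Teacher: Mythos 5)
Your proposal is correct and follows essentially the same route as the paper: both non-inclusions are derived from Theorems~\ref{thr_or_vs_cr}, \ref{thr_or_vs_pr}, \ref{thr_cfr_vs_pr} and \ref{thr_cfr_vs_cr} via transitivity of $\sqsubseteq$, exactly as the paper does (you merely spell out the transitivity argument and the surviving witnesses more explicitly).
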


\begin{proof}
The property $\mathbb{PR} \not\sqsubseteq \mathbb{CR}$ follows directly from $\mathbb{OR} \incomparable \mathbb{CR}$ (Figure~\ref{thr_or_vs_cr}) and $\mathbb{OR} \sqsubset \mathbb{PR}$ (Figure~\ref{thr_or_vs_pr}). The property $\mathbb{CR} \not\sqsubseteq \mathbb{PR}$ follows directly from $\mathbb{CFR} \incomparable \mathbb{PR}$ (Figure~\ref{thr_cfr_vs_pr}) and $\mathbb{CFR} \sqsubset \mathbb{CR}$ (Figure~\ref{thr_cfr_vs_cr}). \qed 
\end{proof}

\begin{theorem}\label{thr_rr_vs_or}
$\mathbb{RR} \incomparable \mathbb{OR}$
\end{theorem}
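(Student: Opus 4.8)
The statement has two halves, $\mathbb{RR}\not\sqsubseteq\mathbb{OR}$ and $\mathbb{OR}\not\sqsubseteq\mathbb{RR}$, and the plan is to treat them separately. The first rests on the observation that an ordered regulation can never forbid a rule from being applied immediately after itself, since a strict partial order is irreflexive, whereas a regular regulation can. The second, and harder, rests on the ``no extra words'' condition of regular systems: it forces the $\omega$-regular language $\zeta$ to equal exactly the set of realisable run labels, and the plan is to show by pumping that for a suitable ordered system this set is not $\omega$-regular, and that this obstruction does not depend on which rules are used.

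For $\mathbb{RR}\not\sqsubseteq\mathbb{OR}$, I would take the regular system $\overline{\mathcal{M}}$ with $\mathtt{M_0}=\emptyset$, rules $\mu_\mathtt{1}:\emptyset\to\mathtt{\{A\}}$ and $\mu_\mathtt{2}:\emptyset\to\mathtt{\{B\}}$, and $\zeta=\mu_\mathtt{1}.\mu_\mathtt{2}^\omega$; its semantics is the single run $\emptyset,\mathtt{\{A\}},\mathtt{\{A,B\}},\mathtt{\{A,B,B\}},\ldots$, and $\zeta$ has no extra words since this word is realised. If some $\mathcal{M}'\in\mathbb{OR}$ had the same semantics, its unique run would begin with the transition $\emptyset\to\mathtt{\{A\}}$, whose only realising rewrite rule is the pair $(\emptyset,\mathtt{\{A\}})=\mu_\mathtt{1}$, so $\mu_\mathtt{1}\in\mathcal{X}'$. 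Since ${}^{\bullet}\mu_\mathtt{1}=\emptyset$ is a submultiset of every state and $\mu_\mathtt{1}\not<\mu_\mathtt{1}$ by irreflexivity, applying $\mu_\mathtt{1}$ forever would be a legal ordered run $\emptyset,\mathtt{\{A\}},\mathtt{\{A,A\}},\ldots$ of $\mathcal{M}'$, distinct from the run above; hence $\mathfrak{L}(\mathcal{M}')\ne\mathfrak{L}(\overline{\mathcal{M}})$.

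For $\mathbb{OR}\not\sqsubseteq\mathbb{RR}$, I would use the ordered system $\overline{\mathcal{M}}$ of Figure~\ref{or_example}. For every $k\ge1$ it has the run $\pi_k$ that applies $\mu_\mathtt{1}$ $k$ times, then $\mu_\mathtt{2}$ $k$ times, and then idles (the run $\pi_\mathtt{1}$ of Figure~\ref{or_runs_example} is the case $k=2$), with state sequence $\emptyset,\mathtt{\{A\}},\ldots,\mathtt{\{A^{k}\}},\mathtt{\{A^{k-1}\}},\ldots,\emptyset,\emptyset,\ldots$ Suppose towards a contradiction that some $\overline{\mathcal{N}}=(\mathcal{Y},\mathtt{N_0},\eta)\in\mathbb{RR}$ had $\mathfrak{L}(\overline{\mathcal{N}})=\mathfrak{L}(\overline{\mathcal{M}})$. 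Then each $\pi_k$ lies in $\mathfrak{L}(\overline{\mathcal{N}})$, so $\eta$ contains the label $\overrightarrow{\pi_k}$ of some $\overline{\mathcal{N}}$-run with that state sequence. In steps $k{+}1,\ldots,2k$ the multiplicity of $\mathtt{A}$ falls by exactly one per step, so each rule used there has net effect $-1$ on $\mathtt{A}$ (in particular its left-hand side contains $\mathtt{A}$), and $\overrightarrow{\pi_k}$ has the form $u_1\cdots u_k\,d_1\cdots d_k\,\varepsilon^\omega$ with every $d_i$ of this kind. Fixing a deterministic $\omega$-automaton for $\eta$ and choosing $k$ larger than its number of states, a state of the automaton repeats while it reads $d_1\cdots d_k$; pumping the intervening factor yields a word $w\in\eta$ whose ``decreasing'' block has length strictly greater than $k$, the prefix $u_1\cdots u_k$ being untouched. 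But a run realising $w$ would be in state $\mathtt{\{A^{k}\}}$ after its first $k$ steps, and since each rule of the decreasing block lowers the multiplicity of $\mathtt{A}$ by one and requires at least one $\mathtt{A}$ present, after $k$ such steps the state is $\emptyset$ and no further rule of the block is enabled, so $w$ labels no run. This contradicts the requirement that every word of $\eta$ be the label of some run, so no regular system has the semantics of $\overline{\mathcal{M}}$.

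The main obstacle is the second direction, where one must exclude \emph{every} regular system, including ones whose rules are unrelated to $\mu_\mathtt{1}$ and $\mu_\mathtt{2}$. I would overcome it by arguing on the fixed state sequences of the runs $\pi_k$ rather than on rule names: whatever rules $\overline{\mathcal{N}}$ uses, these sequences force each of the $k$ ``down'' steps to decrease the $\mathtt{A}$-count by exactly one, which is exactly what makes the pumped word unrealisable. If one prefers an argument that avoids the empty-rule idling convention altogether, Figure~\ref{or_example} may be replaced by the ordered system over $\mathtt{\{A,C\}}$ with rules $\mu_\mathtt{1}:\emptyset\to\mathtt{\{A\}}$, $\mu_\mathtt{2}:\mathtt{\{A\}}\to\mathtt{\{C\}}$, $\mu_\mathtt{3}:\emptyset\to\mathtt{\{C\}}$ and order $\mu_\mathtt{1}<\mu_\mathtt{2}<\mu_\mathtt{3}$, whose runs $\mu_\mathtt{1}^m\mu_\mathtt{2}^m\mu_\mathtt{3}^\omega$ carry the identical pumping argument on the $\mu_\mathtt{2}$-block.
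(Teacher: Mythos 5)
Your proposal is correct and follows the same two-pronged skeleton as the paper --- irreflexivity of the strict partial order for $\mathbb{RR}\not\sqsubseteq\mathbb{OR}$, and the non-$\omega$-regularity of an $\{a^nb^n\}$-shaped label language for $\mathbb{OR}\not\sqsubseteq\mathbb{RR}$ --- but your execution of the second direction is genuinely more careful than the paper's. The paper simply observes that the run labels of the system of Figure~\ref{or_example} form the language $\{\mu_\mathtt{1}^\mathtt{n}\mu_\mathtt{2}^\mathtt{n}\varepsilon^\omega\}$ and declares it non-$\omega$-regular, which tacitly assumes the competing regular system is built over the \emph{same} rule alphabet; you close this gap by quantifying over arbitrary $\overline{\mathcal{N}}\in\mathbb{RR}$, reading off from the fixed state sequences of the runs $\pi_k$ that every rule in the descending block must consume an $\mathtt{A}$ and have net effect $-1$, and then pumping to produce a word of $\eta$ that labels no run, contradicting the ``no extra words'' condition. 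This is the right way to make the argument airtight, and your observation that the obstruction lives in the state sequences rather than the rule names is exactly the missing ingredient. Two small remarks: for the first direction you use a different (and equally valid) witness than the paper's $\mu_\mathtt{1}.\mu_\mathtt{2}.\mu_\mathtt{1}.\mu_\mathtt{3}.\varepsilon^\omega$ system, with the same underlying point that an ordered regulation can never forbid a rule from following itself; and in the pumping step you should either take a deterministic \emph{Muller/parity} automaton (deterministic B\"uchi automata do not capture all $\omega$-regular languages) or, more simply, fix one accepting run of a nondeterministic B\"uchi automaton on $\overrightarrow{\pi_k}$ and pump along it --- either repair is routine and does not affect the argument.
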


\begin{proof}
To show that $\mathbb{RR} \not\sqsubseteq \mathbb{OR}$, we find an $\mathcal{M} \in \mathbb{RR}$ such that $\forall \mathcal{M}' \in \mathbb{OR}$ holds that $\mathfrak{L}(\mathcal{M}) \neq \mathfrak{L}(\mathcal{M}')$. We use the following $\mathcal{M} \in \mathbb{RR}$:

\begin{center}
$\mathcal{M} = 
		\left\{ 
			\begin{array}{l}
	  		\mathtt{M_0} = \emptyset, \hspace{0.2cm} \zeta = \left\{ 
          \begin{array}{l}
          \mu_\mathtt{1}.\mu_\mathtt{2}.\mu_\mathtt{1}.\mu_\mathtt{3}.\varepsilon^\omega
          \end{array} 
          \right\}, \\
	  		\mathcal{X} = \left\{ 
	  		 	\begin{array}{l}
	  		 	\mu_\mathtt{1}: \emptyset \to \mathtt{ \{A\} }, \mu_\mathtt{2}: \mathtt{ \{A\} } \to \emptyset, \mu_\mathtt{3}: \mathtt{ \{A\} } \to \mathtt{ \{B\} }
	  		 	\end{array} 
	  		  \right\} \\
	  		\end{array}
  		\right\}$
\end{center}

\noindent with $\mathcal{S} = \mathtt{\{ A,B \}}$. In this system, rules can be used only in a particular finite sequence, followed by infinite application of empty rule $\varepsilon$. Therefore, the set of runs $\mathfrak{L}(\mathcal{M})$ contains only one possible run $\pi$:

{\centering
\begin{tikzpicture}

\node[state] (s0) {};
\node[state, right=1cm of s0] (s1) {$\mathtt{A}$};
\node[state, right=1cm of s1] (s2) {};
\node[state, right=1cm of s2] (s3) {$\mathtt{A}$};
\node[state, right=1cm of s3] (s4) {$\mathtt{B}$};
\node[right=1cm of s4] (D) {$\ldots$};

\draw[shorten >=1mm,shorten <=1mm]
	    (s0) edge[above] node[label_node]{$\mu_\mathtt{1}$} (s1)
		(s1) edge[above] node[label_node]{$\mu_\mathtt{2}$} (s2)
		(s2) edge[above] node[label_node]{$\mu_\mathtt{1}$} (s3)
		(s3) edge[above] node[label_node]{$\mu_\mathtt{3}$} (s4)
		(s4) edge[above] node[label_node]{$\varepsilon$} (D);
\end{tikzpicture}
\\}

\noindent Since the run $\pi$ enforces that rule $\mu_\mathtt{1} \in \mathcal{X}'$, it is inevitable that, for example, run:

{\centering
\begin{tikzpicture}

\node[state] (s0) {};
\node[state, right=1cm of s0] (s1) {$\mathtt{A}$};
\node[state, right=1cm of s1] (s2) {$\mathtt{A,A}$};
\node[state, right=1cm of s2] (s3) {$\mathtt{A,A,A}$};
\node[state, right=1cm of s3] (s4) {$\mathtt{A,A,A,A}$};
\node[right=1cm of s4] (D) {$\ldots$};

\draw[shorten >=1mm,shorten <=1mm]
	    (s0) edge[above] node[label_node]{$\mu_\mathtt{1}$} (s1)
		(s1) edge[above] node[label_node]{$\mu_\mathtt{1}$} (s2)
		(s2) edge[above] node[label_node]{$\mu_\mathtt{1}$} (s3)
		(s3) edge[above] node[label_node]{$\mu_\mathtt{1}$} (s4)
		(s4) edge[above] node[label_node]{$\mu_\mathtt{1}$} (D);
\end{tikzpicture}
\\}

\noindent belongs to $\mathfrak{L}(\mathcal{M'})$ because it can be applied in the first step and its subsequent applications cannot be limited in the ordered system (follows from irreflexivity of the order $\zeta'$).

To show that $\mathbb{OR} \not\sqsubseteq \mathbb{RR}$, we find an $\mathcal{M} \in \mathbb{OR}$ such that $\forall \mathcal{M}' \in \mathbb{RR}$ holds that $\mathfrak{L}(\mathcal{M}) \neq \mathfrak{L}(\mathcal{M}')$. We use $\mathcal{M} \in \mathbb{OR}$ from Figure~\ref{or_example}. Let $\mathcal{A} = \{ \mu_\mathtt{1}^\mathtt{n}.\mu_\mathtt{2}^\mathtt{n}.\varepsilon^\omega ~|~ \mathtt{n} \in \mathbb{N} \}$ be an $\omega$-language over rules $\mu_\mathtt{1},\mu_\mathtt{2},\varepsilon$. For the system $\mathcal{M}$, it holds that $\forall \pi \in \mathfrak{L}(\mathcal{M}). \overrightarrow{\pi} \in \mathcal{A}$. However, language $\mathcal{A}$ is obviously not $\omega$-regular, therefore such an $\mathcal{M}' \in \mathbb{RR}$ does not exist. \qed 
\end{proof}

\begin{theorem}\label{thr_rr_vs_pr}
$\mathbb{RR} \incomparable \mathbb{PR}$
\end{theorem}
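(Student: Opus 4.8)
The claim is a conjunction of two non-inclusions, $\mathbb{PR} \not\sqsubseteq \mathbb{RR}$ and $\mathbb{RR} \not\sqsubseteq \mathbb{PR}$, and the first comes essentially for free from results already proved. By Theorem~\ref{thr_or_vs_pr} we have $\mathbb{OR} \sqsubset \mathbb{PR}$, hence $g(\mathbb{OR}) \subseteq g(\mathbb{PR})$, while Theorem~\ref{thr_rr_vs_or} gives $\mathbb{OR} \not\sqsubseteq \mathbb{RR}$. If $\mathbb{PR} \sqsubseteq \mathbb{RR}$ held, we would get $g(\mathbb{OR}) \subseteq g(\mathbb{PR}) \subseteq g(\mathbb{RR})$, i.e. $\mathbb{OR} \sqsubseteq \mathbb{RR}$, contradicting Theorem~\ref{thr_rr_vs_or}. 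So I would settle $\mathbb{PR} \not\sqsubseteq \mathbb{RR}$ in a single line this way.

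For $\mathbb{RR} \not\sqsubseteq \mathbb{PR}$ the plan is to exhibit a regular system whose unique run revisits the initial state, and then to exploit the \emph{memorylessness} of the successor function: in a programmed system, admissibility of a continuation depends only on the last applied rule, never on how the run reached the current state. Concretely, I would take $\overline{\mathcal{M}} \in \mathbb{RR}$ over $\mathcal{S} = \mathtt{\{A,B\}}$ with $\mathtt{M_0} = \mathtt{\{A\}}$, rules $\mu_\mathtt{1}: \mathtt{\{A\}} \to \mathtt{\{B\}}$, $\mu_\mathtt{2}: \mathtt{\{B\}} \to \mathtt{\{A\}}$, $\mu_\mathtt{3}: \mathtt{\{B\}} \to \emptyset$, and $\zeta = \{\, \mu_\mathtt{1} . \mu_\mathtt{2} . \mu_\mathtt{1} . \mu_\mathtt{3} . \varepsilon^\omega \,\}$. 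First I would verify the routine facts: $\zeta$ is $\omega$-regular (a single word), contains no extra words, and $\mathfrak{L}(\overline{\mathcal{M}})$ is exactly $\{\pi\}$ with $\pi = \mathtt{\{A\}}\, \mathtt{\{B\}}\, \mathtt{\{A\}}\, \mathtt{\{B\}}\, \emptyset\, \emptyset \ldots$, where $\varepsilon$ is applicable from step $5$ onward precisely because no rule is enabled at $\emptyset$ (all left-hand sides are nonempty).

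The core step is a contradiction argument. Suppose $\mathcal{M}' = (\mathcal{X}', \mathtt{M_0}, \zeta') \in \mathbb{PR}$ with $\mathfrak{L}(\mathcal{M}') = \{\pi\}$. Since $\pi \in \mathfrak{L}(\mathcal{M}')$, the label of $\pi$ forces $\mathcal{X}'$ to contain rules $r_\mathtt{1}, \ldots, r_\mathtt{4}$ realizing the four transitions of $\pi$ — each effect is forced, since $^{\bullet}r_\mathtt{i}$ cannot be empty — with $r_\mathtt{i+1} \in \zeta'(r_\mathtt{i})$, $\varepsilon \in \zeta'(r_\mathtt{4})$, $\varepsilon \in \zeta'(\varepsilon)$, and no rule of $\mathcal{X}'$ enabled at $\emptyset$ (because $\varepsilon$ fires there in $\pi$). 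Now consider the run $\sigma$ that starts afresh in $\mathtt{M_0} = \mathtt{\{A\}}$ and applies $r_\mathtt{3}, r_\mathtt{4}, \varepsilon, \varepsilon, \ldots$, visiting $\mathtt{\{A\}}\, \mathtt{\{B\}}\, \emptyset\, \emptyset \ldots$: it is a legal MRS run of $\mathcal{M}'$ ($\varepsilon$ is enabled at $\emptyset$ by the observation above), and it respects $\zeta'$ since the first rule carries no constraint and the remaining successor requirements $r_\mathtt{4} \in \zeta'(r_\mathtt{3})$, $\varepsilon \in \zeta'(r_\mathtt{4})$, $\varepsilon \in \zeta'(\varepsilon)$ already hold inside $\pi$. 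Hence $\sigma \in \mathfrak{L}(\mathcal{M}')$, but $\sigma \neq \pi$ (they differ at index $2$), contradicting $\mathfrak{L}(\mathcal{M}') = \{\pi\}$. Thus no such $\mathcal{M}'$ exists, so $\mathbb{RR} \not\sqsubseteq \mathbb{PR}$, and together with the first part $\mathbb{RR} \incomparable \mathbb{PR}$.

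I expect the main obstacle to be bookkeeping around the empty rule rather than any deep difficulty: the whole argument hinges on $\emptyset$ being a genuinely dead state, so I would be careful to justify both that $\pi$ truly uses $\varepsilon$ from step $5$ (forcing $\emptyset$ dead in $\mathcal{M}'$ as well) and that this same fact makes $\varepsilon$ available to $\sigma$ — which is exactly why every rule in the example is given a nonempty left-hand side. Checking $\omega$-regularity of $\zeta$, the no-extra-words condition, and uniqueness of $\pi$ is straightforward.
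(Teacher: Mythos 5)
Your proposal is correct, and while the easy direction $\mathbb{PR} \not\sqsubseteq \mathbb{RR}$ is handled exactly as in the paper (via Theorems~\ref{thr_or_vs_pr} and~\ref{thr_rr_vs_or}), your argument for $\mathbb{RR} \not\sqsubseteq \mathbb{PR}$ takes a genuinely different route. The paper reuses the $\mathbb{RR}$ system from the proof of Theorem~\ref{thr_rr_vs_or} (whose single run uses $\mu_\mathtt{1}$ twice with \emph{different} successors) and performs a four-way case analysis on whether $\mu_\mathtt{2}$ and $\mu_\mathtt{3}$ lie in $\zeta'(\mu_\mathtt{1})$: three cases kill the intended run, and the fourth admits a permuted run $\mu_\mathtt{1}\mu_\mathtt{3}\mu_\mathtt{1}\mu_\mathtt{2}$ not in $\mathfrak{L}(\mathcal{M})$. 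You instead build a system whose unique run revisits the initial multiset and exploit the memorylessness of the successor function directly: the tail $r_\mathtt{3}r_\mathtt{4}\varepsilon^\omega$ of the forced label can be replayed from $\mathtt{M_0}$ because the first step of a programmed run is unconstrained and all subsequent successor requirements are inherited verbatim from $\pi$. Both arguments are sound; the paper's isolates the specific obstruction (a rule needing two distinct successors at different occurrences), which is perhaps more illustrative of \emph{why} $\mathbb{PR}$ fails here, whereas your ``suffix restart'' argument avoids case analysis entirely and is more robust --- it would apply to any $\mathbb{RR}$ system whose unique run returns to $\mathtt{M_0}$ and then does something new. Your care with the empty rule (giving every rule a nonempty left-hand side so that $\emptyset$ is genuinely dead, and noting that the $\emptyset \to \emptyset$ transitions can only be realized by $\varepsilon$) is exactly the bookkeeping needed to make the restart legal.
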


\begin{proof}
To show that $\mathbb{RR} \not\sqsubseteq \mathbb{PR}$, we find an $\mathcal{M} \in \mathbb{RR}$ such that $\forall \mathcal{M}' \in \mathbb{PR}$ holds that $\mathfrak{L}(\mathcal{M}) \neq \mathfrak{L}(\mathcal{M}')$. We use the $\mathcal{M} \in \mathbb{RR}$ from the proof of Figure~\ref{thr_rr_vs_or}. The run $\pi$ actually enforces all the rules $\mu_\mathtt{1},\mu_\mathtt{2},\mu_\mathtt{3}$ to be present in $\mathcal{X}'$ due to steps one, two, and four, respectively. Let us focus on the successor function $\zeta'$ for rule $\mu_\mathtt{1}$ w.r.t. rules $\mu_\mathtt{2},\mu_\mathtt{3}$. There are four options:

\vspace*{0.3cm}

\hspace*{1.5cm}\begin{tabularx}{1\textwidth}{m{4cm} m{10cm}}
$\mu_\mathtt{2} \not\in \zeta'(\mu_\mathtt{1}) \wedge \mu_\mathtt{3} \not\in \zeta'(\mu_\mathtt{1})$ & 
$\ldots$ run $\pi \not\in \mathfrak{L}(\mathcal{M}')$ because $\mu_\mathtt{2}$ cannot be used after $\mu_\mathtt{1}$\\
$\mu_\mathtt{2} \not\in \zeta'(\mu_\mathtt{1}) \wedge \mu_\mathtt{3} \in \zeta'(\mu_\mathtt{1})$ & 
$\ldots$ run $\pi \not\in \mathfrak{L}(\mathcal{M}')$ because $\mu_\mathtt{2}$ cannot be used after $\mu_\mathtt{1}$\\
$\mu_\mathtt{2} \in \zeta'(\mu_\mathtt{1}) \wedge \mu_\mathtt{3} \not\in \zeta'(\mu_\mathtt{1})$ & 
$\ldots$ run $\pi \not\in \mathfrak{L}(\mathcal{M}')$ because $\mu_\mathtt{3}$ cannot be used after $\mu_\mathtt{1}$\\
$\mu_\mathtt{2} \in \zeta'(\mu_\mathtt{1}) \wedge \mu_\mathtt{3} \in \zeta'(\mu_\mathtt{1})$ & 
\begin{tikzpicture}[scale=0.7]
\node[state] (s0) {};
\node[state, right=1cm of s0] (s1) {$\mathtt{A}$};
\node[state, right=1cm of s1] (s2) {$\mathtt{B}$};
\node[state, right=1cm of s2] (s3) {$\mathtt{A,B}$};
\node[state, right=1cm of s3] (s4) {$\mathtt{B}$};
\node[right=1cm of s4] (D) {$\ldots \pi'$};
\draw[shorten >=1mm,shorten <=1mm]
	    (s0) edge[above] node[label_node]{$\mu_\mathtt{1}$} (s1)
		(s1) edge[above] node[label_node]{$\mu_\mathtt{3}$} (s2)
		(s2) edge[above] node[label_node]{$\mu_\mathtt{1}$} (s3)
		(s3) edge[above] node[label_node]{$\mu_\mathtt{2}$} (s4)
		(s4) edge[above] node[label_node]{} (D);
\end{tikzpicture} \\
\end{tabularx}

\vspace*{0.3cm}

\noindent with $\pi' \in \mathfrak{L}(\mathcal{M'})$ but $\pi' \not\in \mathfrak{L}(\mathcal{M})$. In each case holds $\mathfrak{L}(\mathcal{M}) \neq \mathfrak{L}(\mathcal{M}')$.

The property $\mathbb{PR} \not\sqsubseteq \mathbb{RR}$ follows directly from $\mathbb{RR} \incomparable \mathbb{OR}$ (Figure~\ref{thr_rr_vs_or}) and $\mathbb{OR} \sqsubset \mathbb{PR}$ (Figure~\ref{thr_or_vs_pr}). \qed 
\end{proof}

\newpage
\begin{theorem}\label{thr_rr_vs_cfr}
$\mathbb{RR} \incomparable \mathbb{CFR}$
\end{theorem}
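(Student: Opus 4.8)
The plan is to prove the two directions $\mathbb{RR}\not\sqsubseteq\mathbb{CFR}$ and $\mathbb{CFR}\not\sqsubseteq\mathbb{RR}$ separately, reusing in each case a system already introduced above.

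For $\mathbb{RR}\not\sqsubseteq\mathbb{CFR}$, I would run the same argument that established $\mathbb{RR}\not\sqsubseteq\mathbb{OR}$ in the proof of Theorem~\ref{thr_rr_vs_or}, with irreflexivity of the partial order replaced by the fact that a rule with empty left-hand side is concurrent with no rule. Take the $\mathbb{RR}$ system $\mathcal{M}$ used there, whose regulation is the single word $\mu_\mathtt{1}.\mu_\mathtt{2}.\mu_\mathtt{1}.\mu_\mathtt{3}.\varepsilon^\omega$, so that $\mathfrak{L}(\mathcal{M})$ contains exactly one run. Its first step $\emptyset\to\mathtt{\{A\}}$ forces the rule $\mu_\mathtt{1}=(\emptyset,\mathtt{\{A\}})$ into $\mathcal{X}'$ of any candidate $\mathcal{M}'\in\mathbb{CFR}$, as it is the unique rule realizing that transition. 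Since $^\bullet\mu_\mathtt{1}=\emptyset$, the rule $\mu_\mathtt{1}$ is concurrent with no rule, hence $(\mu_\mathtt{1},\mu)\notin\zeta'$ for every $\mu$ by condition~(2) of the concurrent-free regulation, and $\mu_\mathtt{1}$ is enabled in every state. Therefore the run with label $\mu_\mathtt{1}^\omega$ (states $\emptyset,\mathtt{\{A\}},\mathtt{\{A,A\}},\dots$) lies in $\mathfrak{L}(\mathcal{M}')$ but not in $\mathfrak{L}(\mathcal{M})$, so $\mathfrak{L}(\mathcal{M})\neq\mathfrak{L}(\mathcal{M}')$ for every $\mathcal{M}'\in\mathbb{CFR}$.

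For $\mathbb{CFR}\not\sqsubseteq\mathbb{RR}$, I would reuse the $\mathbb{CFR}$ system $\mathcal{M}$ of Figure~\ref{cfr_example} --- the very system used in the proof of Theorem~\ref{thr_cfr_vs_pr} --- and mirror that proof, with the pumped cycle of the successor graph replaced by a repeated factor supplied by the pumping lemma for $\omega$-regular languages. Suppose for contradiction that $\mathcal{M}'\in\mathbb{RR}$ with $\omega$-regular regulation $\zeta'$ satisfies $\mathfrak{L}(\mathcal{M}')=\mathfrak{L}(\mathcal{M})$. The transition $\mathtt{\{A\}}\to\emptyset$ occurring in $\mathfrak{L}(\mathcal{M})$ forces $\mu_\mathtt{3}=(\mathtt{\{A\}},\emptyset)\in\mathcal{X}'$, and every other rule of $\mathcal{X}'$ that is ever applied changes the number of $\mathtt{B}$s by exactly $\pm1$ while leaving $\mathtt{A}$ untouched, since those are the only remaining multiset transitions appearing in $\mathfrak{L}(\mathcal{M})$. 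For every $n$, the regulation $\zeta'$ contains the label $w_n$ of the run of $\mathcal{M}'$ that produces $n$ copies of $\mathtt{B}$, consumes them one at a time, fires $\mu_\mathtt{3}$, and then stays on $\varepsilon^\omega$. Running a B\"uchi automaton for $\zeta'$ on $w_n$ with $n$ exceeding its number of states, the pigeonhole principle yields a factor inside the first $n$ ($\mathtt{B}$-producing) letters along which the automaton returns to the same state; repeating that factor gives a word $w_n'\in\zeta'$ whose accepting tail $\varepsilon^\omega$ is unchanged. By the clause $\forall\overrightarrow{\pi}\in\zeta'\ \exists\pi\in\mathfrak{L}(\overline{\mathcal{M}'})$ in the definition of $\mathbb{RR}$, the word $w_n'$ is the label of an actual run $\pi'$ of $\mathcal{M}'$; tracing the states of $\pi'$ (using that the repeated production rules stay enabled from larger states and keep their $+1$-on-$\mathtt{B}$ effect) shows that $\pi'$ fires $\mu_\mathtt{3}$ at a state $\mathtt{\{A\}}\cup\mathtt{\{B\}}^k$ with $k\geq1$, performing the transition $\mathtt{\{A\}}\cup\mathtt{\{B\}}^k\to\mathtt{\{B\}}^k$. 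No run of $\mathcal{M}$ contains this transition, because there $\mu_\mathtt{2}$ is enabled and $(\mu_\mathtt{3},\mu_\mathtt{2})\in\zeta$ forbids $\mu_\mathtt{3}$; hence $\pi'\notin\mathfrak{L}(\mathcal{M})$, contradicting $\mathfrak{L}(\mathcal{M}')=\mathfrak{L}(\mathcal{M})$. So no such $\mathcal{M}'$ exists, and together with the first part we obtain $\mathbb{RR}\incomparable\mathbb{CFR}$.

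I expect the second direction to be the main obstacle. The terse ``the induced rule language is not $\omega$-regular'' slogan used for $\mathbb{OR}\not\sqsubseteq\mathbb{RR}$ does not suffice here: one must argue that a candidate $\mathbb{RR}$ system cannot evade the $\mathtt{B}$-balance by employing rules that alter the $\mathtt{B}$-count by more than one or that discard $\mathtt{A}$ together with some $\mathtt{B}$s (such rules would introduce transitions absent from $\mathfrak{L}(\mathcal{M})$), and one must check that the pumped word is genuinely realized as a run reaching the claimed forbidden transition rather than deadlocking --- which is precisely what the realizability clause in the definition of $\mathbb{RR}$ provides.
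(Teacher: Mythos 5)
Your proof is correct, but both halves take a route that differs from the paper's. For $\mathbb{RR}\not\sqsubseteq\mathbb{CFR}$ you use the same $\mathbb{RR}$ counterexample as the paper (the system from the proof of Theorem~\ref{thr_rr_vs_or}), but where the paper forces all three rules into $\mathcal{X}'$ and does a three-way case analysis on how the concurrency between $\mu_\mathtt{2}$ and $\mu_\mathtt{3}$ could be resolved, you only need the forced rule $\mu_\mathtt{1}=(\emptyset,\mathtt{\{A\}})$: since its left-hand side is empty it is concurrent with no rule, condition~(2) of the $\mathbb{CFR}$ regulation prevents it from ever being deprioritised, and $\mu_\mathtt{1}^\omega$ is an unavoidable extra run --- a shorter argument. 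For $\mathbb{CFR}\not\sqsubseteq\mathbb{RR}$ you reuse the system of Figure~\ref{cfr_example} and give an explicit B\"uchi-automaton pumping argument, whereas the paper constructs a bespoke four-rule system over $\mathtt{\{A,B,C\}}$ engineered so that its run labels (are claimed to) all lie in the evidently non-$\omega$-regular set $\{\mu_\mathtt{I}^\mathtt{n}.\mu_\mathtt{2}.\mu_\mathtt{III}^\mathtt{n}.\mu_\mathtt{4}.\varepsilon^\omega\}$, and concludes in one line. Your version costs more work --- classifying the net effect of every rule a candidate $\mathcal{M}'$ can ever apply, pumping inside the producing block, and invoking the no-extra-words clause of $\mathbb{RR}$ to turn the pumped word into an actual run that reaches a state $\mathtt{\{B\}}^k$ with $k\geq 1$, which occurs in no run of $\mathcal{M}$ --- but it buys rigor: it works even though the labels of the Figure~\ref{cfr_example} system interleave $\mu_\mathtt{1}$ and $\mu_\mathtt{2}$ arbitrarily, and it avoids the terse ``the label language is not $\omega$-regular'' step, which for the paper's own system is slightly delicate (that system also admits, e.g., the run labelled $\mu_\mathtt{1}^\omega$, so its labels do not literally all lie in the displayed set). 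Both arguments establish the theorem; yours is essentially a fully worked-out version of the idea the paper only sketches, applied to a different witness system.
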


\begin{proof}
To show that $\mathbb{RR} \not\sqsubseteq \mathbb{CFR}$, we find an $\mathcal{M} \in \mathbb{RR}$ such that $\forall \mathcal{M}' \in \mathbb{CFR}$ holds that $\mathfrak{L}(\mathcal{M}) \neq \mathfrak{L}(\mathcal{M}')$. We use the $\mathcal{M} \in \mathbb{RR}$ from the proof of Figure~\ref{thr_rr_vs_or}. As stated in the proof of Figure~\ref{thr_rr_vs_pr}, all rules $\mu_\mathtt{1},\mu_\mathtt{2},\mu_\mathtt{3}$ belong to $\mathcal{X}'$. Rules $\mu_\mathtt{2},\mu_\mathtt{3}$ are concurrent and there are three options how to resolve the concurrency:

\vspace*{0.3cm}

\hspace{1.5cm}\begin{tabularx}{1\textwidth}{m{4cm} m{3cm}}
\multicolumn{1}{c}{\makecell{concurrency is \\ not resolved}} & 
\begin{tikzpicture}[scale=0.8]
\node[state] (s0) {};
\node[state, right=1cm of s0] (s1) {$\mathtt{A}$};
\node[state, right=1cm of s1] (s2) {};
\node[state, right=1cm of s2] (s3) {$\mathtt{A}$};
\node[state, right=1cm of s3] (s4) {};
\node[right=1cm of s4] (D) {$\ldots \not\in \mathfrak{L}(\mathcal{M})$};
\draw[shorten >=1mm,shorten <=1mm]
	    (s0) edge[above] node[label_node]{$\mu_\mathtt{1}$} (s1)
		(s1) edge[above] node[label_node]{$\mu_\mathtt{2}$} (s2)
		(s2) edge[above] node[label_node]{$\mu_\mathtt{1}$} (s3)
		(s3) edge[above] node[label_node]{$\mu_\mathtt{2}$} (s4)
		(s4) edge[above] node[label_node]{} (D);
\end{tikzpicture} \\
\multicolumn{1}{c}{$(\mu_\mathtt{2}, \mu_\mathtt{3}) \in \zeta'$} & 
\begin{tikzpicture}[scale=0.8]
\node[state] (s0) {};
\node[state, right=1cm of s0] (s1) {$\mathtt{A}$};
\node[state, right=1cm of s1] (s2) {};
\node[state, right=1cm of s2] (s3) {$\mathtt{A}$};
\node[state, right=1cm of s3] (s4) {};
\node[right=1cm of s4] (D) {$\ldots \not\in \mathfrak{L}(\mathcal{M})$};
\draw[shorten >=1mm,shorten <=1mm]
	    (s0) edge[above] node[label_node]{$\mu_\mathtt{1}$} (s1)
		(s1) edge[above] node[label_node]{$\mu_\mathtt{2}$} (s2)
		(s2) edge[above] node[label_node]{$\mu_\mathtt{1}$} (s3)
		(s3) edge[above] node[label_node]{$\mu_\mathtt{2}$} (s4)
		(s4) edge[above] node[label_node]{} (D);
\end{tikzpicture} \\
\multicolumn{1}{c}{$(\mu_\mathtt{3}, \mu_\mathtt{2}) \in \zeta'$} & 
\begin{tikzpicture}[scale=0.8]
\node[state] (s0) {};
\node[state, right=1cm of s0] (s1) {$\mathtt{A}$};
\node[state, right=1cm of s1] (s2) {$\mathtt{B}$};
\node[state, right=1cm of s2] (s3) {$\mathtt{A,B}$};
\node[state, right=1cm of s3] (s4) {$\mathtt{B,B}$};
\node[right=1cm of s4] (D) {$\ldots \not\in \mathfrak{L}(\mathcal{M})$};
\draw[shorten >=1mm,shorten <=1mm]
	    (s0) edge[above] node[label_node]{$\mu_\mathtt{1}$} (s1)
		(s1) edge[above] node[label_node]{$\mu_\mathtt{3}$} (s2)
		(s2) edge[above] node[label_node]{$\mu_\mathtt{1}$} (s3)
		(s3) edge[above] node[label_node]{$\mu_\mathtt{3}$} (s4)
		(s4) edge[above] node[label_node]{} (D);
\end{tikzpicture} \\
\end{tabularx}

\vspace*{0.3cm}

\noindent In each case holds $\mathfrak{L}(\mathcal{M}) \neq \mathfrak{L}(\mathcal{M}')$.

To show that $\mathbb{CFR} \not\sqsubseteq \mathbb{RR}$, we find an $\mathcal{M} \in \mathbb{CFR}$ such that $\forall \mathcal{M}' \in \mathbb{RR}$ holds that $\mathfrak{L}(\mathcal{M}) \neq \mathfrak{L}(\mathcal{M}')$. We use the following $\mathcal{M} \in \mathbb{CFR}$:

\begin{center}
$\mathcal{M} = 
		\left\{ 
			\begin{array}{l}
	  		\mathtt{M_0} = \mathtt{ \{C\} },\hspace{0.2cm} \zeta = \left\{ 
          \begin{array}{l}
          (\mu_\mathtt{4}, \mu_\mathtt{3})
          \end{array} 
          \right\}, \\
	  		\mathcal{X} = \left\{ 
	  		 	\begin{array}{l}
	  		 	\mu_\mathtt{1}: \mathtt{ \{C\} } \to \mathtt{ \{A,C\} }, \mu_\mathtt{2}: \mathtt{ \{C\} } \to \mathtt{ \{B\} }\\
	  		 	\mu_\mathtt{3}: \mathtt{ \{A,B\} } \to \mathtt{ \{B\} }, \mu_\mathtt{4}: \mathtt{ \{B\} } \to \emptyset
	  		 	\end{array} 
	  		  \right\} \\
	  		\end{array}
  		\right\}$
\end{center}

\noindent with $\mathcal{S} = \mathtt{\{ A,B,C \}}$. In this system, a number $\mathtt{n} \in \mathbb{N}$ of elements \texttt{A} is generated, followed by the change of \texttt{C} to a \texttt{B} and finally, all elements \texttt{A} are discarded, including symbol \texttt{B}. This is ensured by giving priority to rule $\mu_\mathtt{3}$ over rule $\mu_\mathtt{4}$. Please note the concurrency between rules $\mu_\mathtt{1}$ and $\mu_\mathtt{2}$ is not resolved (which is allowed by definition). Since runs:

{\centering
\begin{tikzpicture}

\node[state] (s0) {$\mathtt{C}$};
\node[state, right=1cm of s0] (s1) {$\mathtt{A,C}$};
\node[state, right=1cm of s1] (s2) {$\mathtt{A,B}$};
\node[state, right=1cm of s2] (s3) {$\mathtt{B}$};
\node[state, right=1cm of s3] (s4) {};
\node[right=1cm of s4] (D) {$\ldots \pi_\mathtt{1}$};

\draw[shorten >=1mm,shorten <=1mm]
	    (s0) edge[above] node[label_node]{$\mu_\mathtt{1}$} (s1)
		(s1) edge[above] node[label_node]{$\mu_\mathtt{2}$} (s2)
		(s2) edge[above] node[label_node]{$\mu_\mathtt{3}$} (s3)
		(s3) edge[above] node[label_node]{$\mu_\mathtt{4}$} (s4)
		(s4) edge[above] node[label_node]{$\varepsilon$} (D);
\end{tikzpicture}

\begin{tikzpicture}

\node[state] (s0) {$\mathtt{C}$};
\node[state, right=1cm of s0] (s1) {$\mathtt{B}$};
\node[state, right=1cm of s1] (s2) {};
\node[state, right=1cm of s2] (s3) {};
\node[state, right=1cm of s3] (s4) {};
\node[right=1cm of s4] (D) {$\ldots \pi_\mathtt{2}$};

\draw[shorten >=1mm,shorten <=1mm]
	    (s0) edge[above] node[label_node]{$\mu_\mathtt{2}$} (s1)
		(s1) edge[above] node[label_node]{$\mu_\mathtt{4}$} (s2)
		(s2) edge[above] node[label_node]{$\varepsilon$} (s3)
		(s3) edge[above] node[label_node]{$\varepsilon$} (s4)
		(s4) edge[above] node[label_node]{$\varepsilon$} (D);
\end{tikzpicture}
\\}

\noindent are both allowed, both $\mu_\mathtt{2}, \mu_\mathtt{4} \in \mathcal{X}'$ and also $\mu_\mathtt{I}, \mu_\mathtt{III} \in \mathcal{X}'$ such that $\mathtt{A} \in \mu_\mathtt{I}^\bullet$ (from the first step) and $\mathtt{A} \in~ ^\bullet\mu_\mathtt{III}$ (from the third step).

Let $\mathcal{A} = \{ \mu_\mathtt{I}^\mathtt{n}.\mu_\mathtt{2}.\mu_\mathtt{III}^\mathtt{n}.\mu_\mathtt{4}.\varepsilon^\omega ~|~ \mathtt{n} \in \mathbb{N} \}$ be an $\omega$-language over rules $\mu_\mathtt{I},\mu_\mathtt{2},\mu_\mathtt{III},\mu_\mathtt{4},\varepsilon$. For the system $\mathcal{M}$, it holds that $\forall \pi \in \mathfrak{L}(\mathcal{M}). \overrightarrow{\pi} \in \mathcal{A}$ (such that instead of rules $\mu_\mathtt{I}, \mu_\mathtt{III}$ their possible instances $\mu_\mathtt{1}, \mu_\mathtt{3}$ are used, respectively). However, language $\mathcal{A}$ is obviously not $\omega$-regular, therefore such an $\mathcal{M}' \in \mathbb{RR}$ does not exist. \qed 
\end{proof}

\begin{theorem}\label{thr_rr_vs_cr}
$\mathbb{RR} \incomparable \mathbb{CR}$
\end{theorem}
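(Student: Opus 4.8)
To establish $\mathbb{RR}\incomparable\mathbb{CR}$ one must prove the two non-inclusions $\mathbb{CR}\not\sqsubseteq\mathbb{RR}$ and $\mathbb{RR}\not\sqsubseteq\mathbb{CR}$. The first one I would obtain for free, exactly as in Theorems~\ref{thr_cfr_vs_pr} and~\ref{thr_pr_vs_cr}: $\mathbb{RR}\incomparable\mathbb{CFR}$ (Theorem~\ref{thr_rr_vs_cfr}) exhibits a set of runs in $g(\mathbb{CFR})\setminus g(\mathbb{RR})$, and since $\mathbb{CFR}\sqsubset\mathbb{CR}$ (Theorem~\ref{thr_cfr_vs_cr}) that same set lies in $g(\mathbb{CR})$, whence $g(\mathbb{CR})\not\subseteq g(\mathbb{RR})$, i.e. $\mathbb{CR}\not\sqsubseteq\mathbb{RR}$.

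For $\mathbb{RR}\not\sqsubseteq\mathbb{CR}$ I would reuse the $\mathcal{M}\in\mathbb{RR}$ from the proof of Theorem~\ref{thr_rr_vs_or} (with $\zeta=\{\mu_\mathtt{1}.\mu_\mathtt{2}.\mu_\mathtt{1}.\mu_\mathtt{3}.\varepsilon^\omega\}$), whose only run is $\pi=\emptyset\,\{\mathtt{A}\}\,\emptyset\,\{\mathtt{A}\}\,\{\mathtt{B}\}\,\{\mathtt{B}\}\ldots$. The decisive feature of $\pi$ is that it visits the state $\{\mathtt{A}\}$ twice --- at positions $1$ and $3$ --- but rewrites it differently each time: from position $1$ it goes to $\emptyset$, from position $3$ it goes to $\{\mathtt{B}\}$. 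First I would check that any $\mathcal{M}'\in\mathbb{CR}$ with $\mathfrak{L}(\mathcal{M}')=\mathfrak{L}(\mathcal{M})$ is forced to contain three rules, namely $(\emptyset,\{\mathtt{A}\})$, $\rho_1=(\{\mathtt{A}\},\emptyset)$ and $\rho_2=(\{\mathtt{A}\},\{\mathtt{B}\})$: the result multiset of each relevant step pins down the applied rule uniquely (in particular no rule with empty left-hand side could delete the $\mathtt{A}$ at step $2$ or produce exactly $\{\mathtt{B}\}$ at step $4$), and $\rho_1\neq\rho_2$. I would also note that, since $\pi$ applies $\varepsilon$ forever from position $4$ on, no rule of $\mathcal{X}'$ may be simultaneously enabled and permitted at the state $\{\mathtt{B}\}$. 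Getting this ``forced structure'' claim airtight is the part I expect to require the most care.

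The core of the proof is then the observation that in a conditional system the permissibility of a rule depends only on the current multiset and not on the history. Because the two revisits in $\pi$ are to the identical state $\{\mathtt{A}\}$, the rule $\rho_2$, which must be permitted at position $3$ (it is applied there) and is trivially enabled, is therefore also permitted and enabled at position $1$. Firing $\rho_2$ already at position $1$ yields the run $\emptyset\,\{\mathtt{A}\}\,\{\mathtt{B}\}\,\{\mathtt{B}\}\ldots$ --- the $\{\mathtt{B}\}$-tail being forced to consist of $\varepsilon$-steps exactly as in $\pi$ --- which is a legitimate run of $\mathcal{M}'$ yet differs from $\pi$ already at position $2$ and hence belongs to $\mathfrak{L}(\mathcal{M}')\setminus\mathfrak{L}(\mathcal{M})$. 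Thus no $\mathcal{M}'\in\mathbb{CR}$ reproduces $\mathfrak{L}(\mathcal{M})$, which gives $\mathbb{RR}\not\sqsubseteq\mathbb{CR}$ and, together with the first direction, $\mathbb{RR}\incomparable\mathbb{CR}$.
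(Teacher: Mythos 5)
Your proposal is correct and takes essentially the same route as the paper: it uses the very same $\mathbb{RR}$ system from Theorem~\ref{thr_rr_vs_or} and the observation that conditional regulation is memoryless, so the two visits of the forced rules to the identical state $\{\mathtt{A}\}$ cannot be told apart and an extra run appears, while $\mathbb{CR}\not\sqsubseteq\mathbb{RR}$ is inherited from Theorem~\ref{thr_rr_vs_cfr} together with $\mathbb{CFR}\sqsubset\mathbb{CR}$ (Theorem~\ref{thr_cfr_vs_cr}), exactly as in the paper. Your side claim that the $\{\mathtt{B}\}$-tail must consist of $\varepsilon$-steps is not needed (in $\mathcal{M}'$ that tail could equally be realised by some neutral rule); the new run is completed simply by copying $\pi$'s tail, which is valid by the same memorylessness, so the argument stands.
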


\begin{proof}
To show that $\mathbb{RR} \not\sqsubseteq \mathbb{CR}$, we find an $\mathcal{M} \in \mathbb{RR}$ such that $\forall \mathcal{M}' \in \mathbb{CR}$ holds that $\mathfrak{L}(\mathcal{M}) \neq \mathfrak{L}(\mathcal{M}')$. We use the $\mathcal{M} \in \mathbb{RR}$ from the proof of Figure~\ref{thr_rr_vs_or}. From the only possible run $\pi$ it follows that rules $\mu_\mathtt{2}, \mu_\mathtt{3}$ need to be applicable in the context of element \texttt{A} while it is only possible candidate to be in their prohibited context. Since that cannot be done, there will be some additional runs in $\mathfrak{L}(\mathcal{M}')$. From that $\mathfrak{L}(\mathcal{M}) \neq \mathfrak{L}(\mathcal{M}')$.

The property $\mathbb{CR} \not\sqsubseteq \mathbb{RR}$ follows directly from $\mathbb{RR} \incomparable \mathbb{CFR}$ (Figure~\ref{thr_rr_vs_cfr}) and $\mathbb{CFR} \sqsubset \mathbb{CR}$ (Figure~\ref{thr_cfr_vs_cr}). \qed 
\end{proof}

\subsection{Expressive power}

In this section, we discuss \emph{expressive power} for some of the defined classes. While the generative power discussed above states about the particular sets of runs that can be generated, the expressive power studies \emph{functions} which can be computed by the formalism.

A multiset rewriting system (with or without regulation) \emph{strongly computes} a numerical function $f: \mathbb{N} \to \mathbb{N}$ by starting in an initial multiset $\mathtt{M_0}$ containing an input element $\mathtt{I}$ with $\mathtt{M_0}(\mathtt{I}) = n$ and always reaches a final multiset $\mathtt{M_f}$ (all subsequent multisets are equal) containing an output element $\mathtt{O}$ with $\mathtt{M_f}(\mathtt{O}) = f(n)$. In order for the system to be correct, there cannot exist a run where for the final multiset $\mathtt{M_f}$ holds that $\mathtt{M_f}(\mathtt{O}) \neq f(n)$.

Since this definition does not accommodate nondeterminism nicely (which is natural for multiset rewriting), we say a multiset rewriting system \emph{weakly computes} a numerical function $f: \mathbb{N} \to \mathbb{N}$ if there exists a run with $\mathtt{M_f}(\mathtt{O}) = f(n)$ and for any other run holds that $\mathtt{M_f}(\mathtt{O}) \leq f(n)$. This definition is an adoption of weak computability for Petri nets~\cite{leroux2014functions}.

\newpage
\begin{theorem}
$\mathbb{MRS}$ weakly compute all computable numerical functions.\label{result_ii}
\end{theorem}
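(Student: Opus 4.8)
The plan is to route through the standard correspondence between $\mathbb{MRS}$ and Petri nets \cite{cervesato1994petri}: a multiset over $\mathcal{S}$ is a marking, a rule $\mu=(^{\bullet}\mu,\mu^{\bullet})$ is a transition, rule application is transition firing, and the empty rule $\varepsilon$ plays the role of the stable marking at which a Petri-net run is taken to have delivered its output. Under this dictionary the notion of weak computation used here is exactly the Petri-net one of \cite{leroux2014functions}, so the theorem is a transcription of the known fact that Petri nets weakly compute every computable function. I would nevertheless exhibit the construction directly, as it is compact.

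Fix a computable $f:\mathbb{N}\to\mathbb{N}$ and a register (Minsky) machine $M$ with a distinguished input register $\mathtt{I}$ and output register $\mathtt{O}$ that, started from $\mathtt{I}=n$ and all other registers empty, halts with $\mathtt{O}=f(n)$. Define an $\mathbb{MRS}$ $\mathcal{M}=(\mathcal{X},\mathtt{M_0})$ over $\mathcal{S}=\{\text{registers of }M\}\cup\{\text{program locations of }M\}$ with $\mathtt{M_0}=\{\mathtt{I}^{n},q_{\mathrm{start}}\}$, arranged so that exactly one location token is present throughout (every rule consumes and produces exactly one location token). An instruction $\texttt{INC}(r)$ at $p$ with successor $p'$ becomes the rule $\{p\}\to\{p',r\}$; a test-and-decrement instruction at $p$ (go to $p_0$ if $r=0$, otherwise decrement $r$ and go to $p_1$) becomes two rules, the faithful decrement $\{p,r\}\to\{p_1\}$ and a \emph{guessed} zero rule $\{p\}\to\{p_0\}$ that does not inspect $r$; the halting location $q_{\mathrm{halt}}$ occurs in no left-hand side, so once it is reached only $\varepsilon$ is enabled and the state is stable.

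Correctness splits into an easy half and a hard half. Easy half (the witnessing run): the run that fires a guessed zero rule only when $r$ really is $0$ simulates $M$ faithfully, reaches $q_{\mathrm{halt}}$ with $\mathtt{O}=f(n)$, and then performs $\varepsilon^{\omega}$; hence some run stabilises with $\mathtt{M_f}(\mathtt{O})=f(n)$. Hard half (the bound over all runs): every stabilising run must satisfy $\mathtt{M_f}(\mathtt{O})\le f(n)$. Since a state is stable only when its unique location token is $q_{\mathrm{halt}}$ (any other location leaves some rule enabled), a stabilising run is precisely one that reaches $q_{\mathrm{halt}}$, possibly after firing guessed zero rules while $r>0$, i.e. after ``cheating'', and one must show that no such run can inflate the count of $\mathtt{O}$ beyond $f(n)$. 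This is the real obstacle, and it cannot be removed by making the simulation faithful, because plain multiset rewriting, like plain Petri nets, cannot test a register for zero.

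The resolution is of the kind underlying \cite{leroux2014functions}: rather than simulating $M$ directly, one uses a resource-guarded encoding in which $\mathtt{O}$-tokens are produced only in a terminal phase entered from $q_{\mathrm{halt}}$, and in which a cheated zero-test necessarily corrupts the bookkeeping badly enough that the terminal phase can no longer be reached (the run then deadlocks and stabilises with $\mathtt{M_f}(\mathtt{O})=0$), while a correctly guided run remains faithful and yields exactly $f(n)$; this is what makes $f(n)$ the supremum of $\mathtt{M_f}(\mathtt{O})$ over all runs and forces it to be attained. Carrying out this encoding so that every cheated step is provably harmless rather than merely unchecked is the technical heart of the argument; formalising it — or simply invoking \cite{leroux2014functions} through the $\mathbb{MRS}$/Petri-net correspondence — completes the proof.
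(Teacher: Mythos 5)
Your proposal is correct and takes essentially the same route as the paper: the paper offers no direct construction either, but simply appeals to the equivalence of $\mathbb{MRS}$ with Petri nets~\cite{cervesato1994petri,cervesato1995petri} and to the weak-computability result of~\cite{leroux2014functions}, which is exactly your fallback. Your supplementary Minsky-machine sketch is honest about stopping at the genuinely hard point (bounding the output of cheating runs), and since you resolve it by invoking the cited Petri-net result through the correspondence, the argument matches the paper's.
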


For the proof, we refer to~\cite{cervesato1994petri,cervesato1995petri} where it is shown that expressive power of $\mathbb{MRS}$ is equivalent to expressive power of Petri Nets, because they are both based on simple operations on positive integer counters -- decrements and increments. More precisely, they cannot compute all recursive functions because they lack zero-tests which makes them less expressive than Turing machine  -- in particular, they weakly compute numerical functions~\cite{leroux2014functions}.

\begin{theorem}
$\mathbb{CR}$ and $\mathbb{CFR}$ compute all computable numerical functions. \label{result_iii}
\end{theorem}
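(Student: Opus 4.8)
The plan is to reduce to the well-known fact that deterministic Minsky (counter) machines compute every (total) computable numerical function $f:\mathbb{N}\to\mathbb{N}$, and to show that both $\mathbb{CR}$ and $\mathbb{CFR}$ can simulate such a machine step for step. Theorem~\ref{result_ii} already gives us the ``Petri-net part'' -- encoding counters by multiplicities and implementing increments and unconditional moves by ordinary (neutrally regulated) rules -- so the whole burden of the proof is the one operation a plain MRS cannot do on its own, namely the \emph{zero-test} instruction \textsf{JZDEC}$(c,\ell_0,\ell_1)$: ``if counter $c=0$ jump to $\ell_0$, otherwise decrement $c$ and jump to $\ell_1$''. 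I would first fix, for a given $f$, a deterministic counter machine $R_f$ computing it, and set up the encoding: each counter $c$ is the multiplicity of a fresh element $C_c$, the current instruction label $\ell$ is recorded by a single ``program-counter'' element $P_\ell$ (so every reachable state contains exactly one $P_\ell$), the input element $\mathtt{I}$ is the counter preloaded with $n$ in $\mathtt{M_0}$, the output element $\mathtt{O}$ is the counter whose value we read once all $P_\ell$'s are gone, an \textsf{INC}$(c,\ell')$ instruction becomes the rule $\{P_\ell\}\to\{P_{\ell'},C_c\}$, and a \textsf{HALT} label simply has no outgoing rule (so eventually only $\varepsilon$ is enabled and the multiset becomes final).

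Next I would give the two zero-test gadgets. For $\mathbb{CR}$: introduce the rules $\mu_{=0}\colon \{P_\ell\}\to\{P_{\ell_0}\}$ and $\mu_{>0}\colon \{P_\ell,C_c\}\to\{P_{\ell_1}\}$ and put $\zeta(\mu_{=0})=\{C_c\}$ (a single prohibited context, using the simplified notation) and $\zeta(\mu_{>0})=\emptyset$; then $\mu_{>0}$ is enabled iff $c\ge1$ and $\mu_{=0}$ is \emph{allowed} iff $C_c$ is absent, i.e. iff $c=0$, so at a label $\ell$ exactly the correct branch is applicable. For $\mathbb{CFR}$: use the same two rules (they are concurrent since ${}^{\bullet}\mu_{=0}\cap{}^{\bullet}\mu_{>0}=\{P_\ell\}\neq\emptyset$) and add the single pair $(\mu_{=0},\mu_{>0})$ to $\zeta$; since the only rule with priority over $\mu_{=0}$ is $\mu_{>0}$, the concurrent-free semantics lets $\mu_{=0}$ fire only when $\mu_{>0}$ is not enabled, i.e. only when $c=0$, while $\mu_{>0}$ (having nothing above it) fires whenever $c\ge1$. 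Collecting one such pair per zero-test keeps $\zeta$ irreflexive and acyclic (each gadget rule lies in at most one pair, so $\zeta^{+}=\zeta$), as required by the definition of $\mathbb{CFR}$.

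Then I would argue \textbf{faithfulness}: by induction on the length of the computation of $R_f$, every reachable state of the simulating system encodes the corresponding machine configuration, and at each such state exactly one of the construction's rules is applicable/allowed (the unique \textsf{INC} rule; at a \textsf{JZDEC} label precisely $\mu_{>0}$ when $c\ge1$ and precisely $\mu_{=0}$ when $c=0$; only $\varepsilon$ at a \textsf{HALT} label). Hence the simulating system is deterministic up to $\varepsilon$, its unique run mirrors $R_f$, it reaches a final multiset $\mathtt{M_f}$, and $\mathtt{M_f}(\mathtt{O})=f(n)$; so $f$ is \emph{strongly} computed (in particular computed), which also sharpens Theorem~\ref{result_ii} for these two classes.

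The main obstacle is exactly this faithfulness bookkeeping: one must be careful that the single-token discipline on the $P_\ell$'s is preserved, that no ``spurious'' rule ever becomes applicable in a reachable state (so that the nondeterminism of the underlying MRS is fully eliminated and the ``\textit{every} run ends with $\mathtt{M_f}(\mathtt{O})=f(n)$'' clause of strong computation genuinely holds), and that $\varepsilon$ is never forced before \textsf{HALT} nor the intended branch ever blocked. The zero-test behaviour itself, by contrast, is immediate from the $\mathbb{CR}$ and $\mathbb{CFR}$ semantics recalled above, so the proof is mostly a matter of writing the gadgets out and checking the straightforward invariant.
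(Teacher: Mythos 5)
Your proposal is correct and follows essentially the same route as the paper's proof: simulate a Minsky counter machine by encoding counters as multiplicities and the program counter as a label element, with the zero-test branch realised in $\mathbb{CR}$ by a prohibited context on the ``counter is zero'' rule and in $\mathbb{CFR}$ by giving priority to the decrementing rule. Your additional remarks (concurrency/acyclicity of $\zeta$, determinism yielding strong computation) are consistent elaborations of the same construction.
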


\begin{proof}
Assuming the Church-Turing thesis is correct, the simplest way how to prove this theorem is in terms of Minsky register machine~\cite{minsky1967computation}. Minsky machine has a number of registers storing arbitrary large numbers. A program is a sequence of instructions manipulating the registers. It was shown that two register machines with the following instruction set can compute all computable numerical functions (shown by equivalence to Turing machine~\cite{shepherdson1963computability}):

\begin{tabular}{l l}
  $l_1$ & $: \mathtt{Com}_1$;\\
  $l_2$ & $: \mathtt{Com}_2$;\\
   & \vdots \\
  $l_m$ & $: \mathtt{Com}_m$;\\
  $l_{m+1}$ & $: \mathtt{halt}$;\\
\end{tabular}

\noindent where $m \leq 0$ and every $\mathtt{Com}_i$ is one of two types:

\begin{itemize}
  \item Type I: $c_j := c_j + 1; ~\mathtt{goto}~ l_p$
  \item Type II: $\mathtt{if}~ c_j = 0 ~\mathtt{then~goto}~ l_p ~\mathtt{else}~ c_j := c_j -1; ~\mathtt{goto}~ l_q$
\end{itemize}

\noindent with counter index $j \in \{1,2\}$ and $p,q \in \{1, \ldots, m+1\}$.

Computation of the machine with the counters set to initial values (with $c_1 = n$) starts at $l_1$ and proceeds consistently with the intuitive semantics of the instructions. We say that the machine \emph{halts} if the computation eventually reaches the \texttt{halt} instruction. Finally, the result of computation is the value of counter $c_2$.

Thus, if a register machine can be converted into an equivalent conditional (resp. concurrent-free) regulated MRS, we see that $\mathbb{CR}$ (resp. $\mathbb{CFR}$) can compute all computable numerical functions. First, we show this for $\mathbb{CR}$. The proof for $\mathbb{CFR}$ is identical except for a minor modification specific to its regulation approach.

To represent a register machine as an $\mathcal{M} \in \mathbb{CR}$, we represent the two registers by two elements $c_1, c_2$ (such that intially $c_1$ has $n$ repretitions and $c_2$ represents the output counter). We use elements $l_1, \ldots, l_m, l_{m+1}$ to represent the position of the program. The instructions are represented as follows:

\begin{itemize}
  \item Type I: introduce rule $\mu_\mathtt{i}: \{ l_i \} \rightarrow \{ l_p, c_j \}$
  \item Type II: introduce rules $\mu_\mathtt{i}: \{ l_i \} \rightarrow \{ l_p \}$ and $\overline{\mu}_\mathtt{i}: \{ l_i, c_j \} \rightarrow \{ l_q \}$ and set $\zeta(\mu_\mathtt{i}) = \{ c_j \}$
\end{itemize}

Defining prohibited context to the rule $\mu_\mathtt{i}$ in Type II instruction effectively prioritises rule $\overline{\mu}_\mathtt{i}$ until the counter $c_j$ is not zero. Finally, the instruction \texttt{halt} is represented by the application of rule $\varepsilon$.

This shows that a register machine can be converted into $\mathbb{CR}$. In the case of $\mathbb{CFR}$, the only modification is in the definition of regulation $\zeta$ in the Type II instruction -- we require that $(\mu_\mathtt{i}, \overline{\mu}_\mathtt{i}) \in \zeta$, which ensures that $\overline{\mu}_\mathtt{i}$ has priority until the counter $c_j$ is not zero. \qed 
\end{proof}

\section{Conclusions}

In this paper, we introduced regulated multiset rewriting systems and discussed relationships among them, showing the usefulness of individual classes as most of them are incomparable. The considered regulation mechanisms were mostly inspired by the needs encountered in modelling of biological phenomenons. As our future work, we would like to explore other mechanisms, for example those based on other subclasses of $\omega$ languages. We also want to investigate the expressive power of regulated MRS in more detail and focus on the complexity of common problems in system biology, such as reachability. Finally, we will apply multiset rewriting regulations to a rule-based Biochemical Space language~\cite{trojak2020plosone}, which can be seen as an instantiation of MRS, and present biological case studies of the regulations using this language.

\bibliographystyle{plainurl}% the mandatory bibstyle
\bibliography{papers}

\end{document}